\theoremstyle{plain}
\newtheorem{theorem}{Theorem}[section]
\newtheorem{lemma}[theorem]{Lemma}
\newtheorem{proposition}[theorem]{Proposition}
\begin{document}

%%% AUTHORS

\author{Hugo R. C. Ferreira}
\email{pmxhrf@nottingham.ac.uk}
\affiliation{School of Mathematical Sciences, University of Nottingham, Nottingham NG7 2RD, United Kingdom}

\author{Jorma Louko}
\email{jorma.louko@nottingham.ac.uk}
\affiliation{School of Mathematical Sciences, University of Nottingham, Nottingham NG7 2RD, United Kingdom}

%%% TITLE

\title{Renormalized vacuum polarization on rotating \\ warped AdS${}_3$ black holes}

%%% DATELINE

\date{Revised December 2014}

%%% ABSTRACT

\begin{abstract}
We compute the renormalized vacuum polarization of a massive scalar field in the Hartle-Hawking state on (2+1)-dimensional rotating, spacelike stretched black hole solutions to Topologically Massive Gravity, surrounded by a Dirichlet mirror that makes the state well defined. The Feynman propagator is written as a mode sum on the complex Riemannian section of the spacetime, and a Hadamard renormalization procedure is implemented by matching to a mode sum on the complex Riemannian section of a rotating Minkowski spacetime. No analytic continuation in the angular momentum parameter is invoked. Selected numerical results are given, demonstrating the numerical efficacy of the method. We anticipate that this method can be extended to wider classes of rotating black hole spacetimes, in particular to the Kerr spacetime in four dimensions.
\end{abstract}

%%% KEYWORDS

%\keywords{black hole, stability}

\maketitle

%%%%%%%%%%%%%%%%%%%%%%%%%%%%%%%%%%%%%%%%%%%%%%%%%%%%%%%%%%%%%%%%%%

%%% INTRODUCTION

\section{Introduction}
\label{sec:intro}

The study of quantum field theory on black hole spacetimes has mostly been restricted to static, spherically symmetric spacetimes. Nevertheless, there have been attempts at considering stationary black hole spacetimes, with the main focus on the Kerr spacetime \cite{Frolov:1982fr,Frolov:1984ra,Frolov:1986ut,Frolov:1989jh,Ottewill:2000qh,Duffy:2005mz}. One important task is the computation of expectation values of the renormalized stress-energy tensor for a matter field in a given quantum state \cite{Birrell1984a,Wald1994}. This has proven to be very challenging and, so far, almost all calculations have only addressed the differences between expectation values for different quantum states \cite{Duffy:2005mz} and the large field mass limit \cite{Belokogne:2014ysa}. In \cite{Steif:1993zv}, the stress-energy tensor for the rotating BTZ black hole \cite{Banados:1992wn,Banados:1992gq} was renormalized with respect to AdS${}_3$, by using the fact that the black hole corresponds to AdS${}_3$ with discrete identifications, but this method cannot be used for more general classes of rotating black hole solutions. We could summarize the main difficulties in three points: (i) the technical complexity of the computations required for the Kerr spacetime, due to the lack of spherical symmetry, (ii) the nonexistence of generalizations of the (globally defined, regular and isometry-invariant) Hartle-Hawking state  defined in static spacetimes, and (iii) the unavailability of Euclidean methods which simplify the task in static spacetimes.

To tackle point (i), we focus on a rotating black hole spacetime in 2+1 dimensions, the spacelike stretched black hole \cite{Anninos:2008fx}. This is a vacuum solution of topologically massive gravity (TMG) \cite{Deser:1982vy,Deser:1981wh}, a deformation of (2+1)-dimensional Einstein gravity, and it can be thought of as a ``warped'' version of the BTZ black hole. In contrast to the BTZ solution, the causal structure of the spacelike stretched black hole is similar to that of the Kerr spacetime~\cite{Jugeau:2010nq}. In this setting, the matter field equations can be solved in terms of hypergeometric functions, which considerably simplify the technical issues in comparison with the Kerr spacetime. These black hole solutions are known to be classically stable to massive scalar field perturbations and, in particular, classical superradiance does not give rise to superradiant instabilities~\cite{Ferreira:2013zta}. In this paper, we study a quantum scalar field on this black hole spacetime.

Concerning point (ii), the Hartle-Hawking vacuum state in the Schwarzschild spacetime is well known not to generalize to the Kerr spacetime~\cite{Kay:1988mu}. As reviewed in \cite{Ottewill:2000qh}, this is linked to the existence of a speed-of-light surface, 
outside of which no observer can corotate with the Kerr horizon. However, if we surround the Kerr hole by a mirror that is inside the speed-of-light surface, and we introduce appropriate boundary conditions at the mirror, then a Hartle-Hawing state (regular at the horizon and invariant under the isometries of the spacetime) exists inside the mirror. Further, this Hartle-Hawking state is known to be free from superradiant instabilities for a massless field \cite{Ottewill:2000qh,Duffy:2005mz,Frolov:1998wf} and the same conclusion may well extend to a massive field. In this paper we introduce a similar mirror on the $(2+1)$-dimensional spacelike stretched black hole, and we consider the similar Hartle-Hawking state inside this mirror. This $(2+1)$-dimensional Hartle-Hawking state is known to be free of superradiant instabilities for massless as well as massive fields~\cite{Ferreira:2013zta}. 

Finally, regarding point (iii), while Kerr does not admit a real section with a positive definite metric~\cite{Woodhouse:1977-complex}, it does admit a real section with a complex 
Riemannian metric to which the Feynman propagator in the Hartle-Hawking state inside a mirror can be analytically continued~\cite{Gibbons:1976ue,Brown:1990di,Moretti:1999fb}. This complex Riemannian, or ``quasi-Euclidean'', section on Kerr, hence, serves as the counterpart 
of the more familiar Euclidean (or Riemannian) section of static black hole spacetimes. 
In this paper we introduce the similar complex Riemanian section of the spacelike stretched black hole, and we exploit this section to renormalize the vacuum expectation value of a massive scalar field. The crucial point is that the complex Riemannian section of the spacelike stretched black hole has a unique Green's function, and this Green's function is expressible as a discrete mode sum whose divergence at the coincidence limit can be matched to that of a corresponding mode sum on a complex Riemannian section of a rotating flat spacetime. The renormalization procedure in the Hartle-Hawking state can, hence, be carried out using this discrete mode sum. 

In summary, in this paper we shall compute the renormalized vacuum polarization $\langle \Phi^2 (x) \rangle$ of a massive scalar field $\Phi$ in the Hartle-Hawking state on a spacelike stretched black hole surrounded by a mirror with Dirichlet boundary conditions, implementing the Hadamard renormalization prescription on the complex Riemannian section of the spacetime. In the first instance, this calculation can be taken as a warm-up for the computation of the renormalized stress-energy tensor on the spacelike stretched black hole. In the longer perspective, we believe that all the conceptual aspects of our method are applicable to wide 
classes of rotating black hole spacetimes, and in particular to Kerr in four dimensions. 
An implementation of our method in more than three dimensions will of course face new technical issues due to the more complicated functions that arise in the separation of the wave equation. 

The contents of the paper are as follows. We begin in Sec.~\ref{sec:Lorentzcase} with the quantization of a massive scalar field on the spacelike stretched black hole bounded by a mirror, including a short description of the Hadamard renormalization. In Sec.~\ref{sec:complexcase}, we outline the quasi-Euclidean method we use to obtain the complex Riemannian section of the black hole spacetime and renormalize the vacuum polarization. This is followed in Sec.~\ref{sec:NE} with the numerical evaluation of the renormalized vacuum polarization. Finally, our conclusions are presented in Sec.~\ref{sec:conclusions}. Technical steps in the analysis are deferred to five appendices. Throughout this paper we use the $(-,+,+)$ signature and units in which $\hbar = c = G = k_B = 1$.

%%%%%%%%%%%%%%%%%%%%%%%%%%%%%%%%%%%%%%%%%%%%%%%%%%%%%%%%%%%%%%%%%%

%%% SPACELIKE STRETCHED BLACK HOLES AND SCALAR FIELDS

\section{Spacelike stretched black holes and scalar fields}
\label{sec:Lorentzcase}

In this section, we first give a short description of topologically massive gravity and review the basic features of the spacelike stretched black hole solutions, including their causal structure. We then proceed to quantize the massive scalar field and outline the Hadamard renormalization procedure.

%% SPACELIKE STRETCHED BLACK HOLES

\subsection{Spacelike stretched black holes}
\label{sec:spacelikestretchedblackholes}

The (2+1)-dimensional rotating black holes we focus in this paper are vacuum solutions of topologically massive gravity, whose action is obtained by adding a gravitational Chern-Simons term to the Einstein-Hilbert action with a negative cosmological constant \cite{Deser:1982vy,Deser:1981wh}
\begin{equation}
S = S_{\text{E-H}} + S_{\text{C-S}} \, ,
\end{equation}
with
\begin{align}
S_{\text{E-H}} &= \frac{1}{16\pi G} \int d^3 x \sqrt{-g} \left( R + \frac{2}{\ell^2} \right) \, , \\
S_{\text{C-S}} &= \frac{\ell}{96\pi G \nu} \int d^3 x \sqrt{-g} \, \epsilon^{\lambda\mu\nu} \, \Gamma^{\rho}_{\lambda\sigma} \left( \partial_{\mu} \Gamma^{\sigma}_{\rho\nu} + \frac{2}{3} \Gamma^{\sigma}_{\mu\tau} \Gamma^{\tau}_{\nu\rho} \right) \, .
\end{align}
$G$ is Newton's gravitational constant, $\nu$ is a dimensionless coupling, $g$ is the determinant of the metric, $R$ is the Ricci scalar, $\ell > 0$ is the cosmological length (which will be set to $\ell \equiv 1$ from now on), $\Gamma^{\rho}_{\lambda\sigma}$ are the Christoffel symbols, and $\epsilon^{\lambda\mu\nu}$ is the Levi-Civita tensor in three dimensions.

The spacelike stretched black hole is one of the several types of warped AdS${}_3$ black hole solutions \cite{Anninos:2008fx}. Its metric, in coordinates $(t,r,\theta)$, is given by 
\begin{equation}
ds^2 = - N^2(r) dt^2 + \frac{dr^2}{4 R^2(r) N^2(r)} + R^2(r) \left( d\theta + N^{\theta}(r) dt \right)^2 \, ,
\label{eq:metricbh}
\end{equation}
with $t \in (-\infty,\infty)$, $r \in (0,\infty)$, $(t,r,\theta) \sim (t,r,\theta + 2\pi)$ and
\begin{subequations}
\begin{align}
R^2(r) &= \frac{r}{4} \left[ 3(\nu^2-1)r + (\nu^2+3)(r_+ + r_-) - 4\nu \sqrt{r_+ r_-(\nu^2+3)} \right] \, , \\
N^2(r) &= \frac{(\nu^2+3)(r-r_+)(r-r_-)}{4R^2(r)} \, , \\
N^{\theta}(r) &= \frac{2\nu r - \sqrt{r_+ r_- (\nu^2+3)}}{2 R^2(r)} \, .
\end{align}
\end{subequations}

There are outer and inner horizons at $r = r_+$ and $r = r_-$, respectively, where the coordinates $(t, r, \theta)$ become singular, and a singularity at $r=r_0$. The dimensionless coupling $\nu \in (1, \infty)$ is the warp factor, and in the limit $\nu \to 1$ the above metric reduces to the metric of the BTZ black hole in a rotating frame. More details about this black hole solution can be found in  \cite{Ferreira:2013zta,Nutku:1993eb,Gurses1994,Moussa:2003fc,Moussa:2008sj,Anninos:2008fx,Bengtsson:2005zj,Anninos:2008qb}. Here, we just describe a few relevant features.

\begin{figure}[t!]
\begin{center}
\includegraphics[scale=0.55]{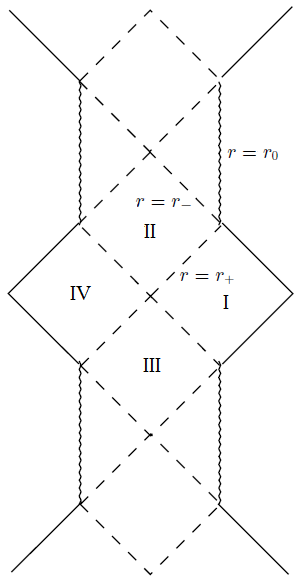} 
\hspace*{15ex}
\includegraphics[scale=0.55]{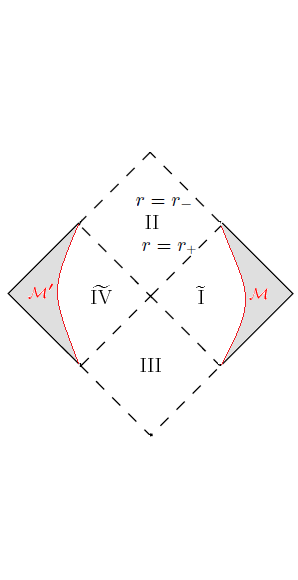}
\caption{\label{fig:CPdiagrams} Carter-Penrose diagrams of the spacelike stretched black hole spacetime for the case $r_0 < r_- < r_+$ on the left and of the manifold $M$ with mirrors described in the text on the right (adapted from Ref. \cite{Jugeau:2010nq}).}
\end{center}
\end{figure}

The Carter-Penrose diagram for this spacetime when $r_0 < r_- < r_+$ is shown in Fig.~\ref{fig:CPdiagrams}, which is essentially of the same form of those of asymptotically flat spacetimes in 3+1 dimensions. 

Consider the exterior region $r > r_+$. $\partial_t$ and $\partial_{\theta}$ are Killing vector fields. However, $\partial_t$ is spacelike \emph{everywhere}, even though surfaces of constant $t$ are still spacelike. Consequently, there is no stationary limit surface and no observers following orbits of $\partial_t$ (the usual ``static observers'' in other spacetimes) anywhere. In fact, it is easy to show that there is not any timelike Killing vector field in the exterior region of the spacetime.

Nonetheless, there are observers at a given radius $r$ following orbits of the vector field $\xi(r) = \partial_t + \Omega(r) \, \partial_{\theta}$, which are timelike as long as
\begin{equation}
\Omega_-(r) < \Omega(r) < \Omega_+(r) \, ,
\label{eq:conditionalmoststationary}
\end{equation}
with
\begin{equation}
\Omega_{\pm}(r) = - \frac{2}{2\nu r - \sqrt{r_+ r_- (\nu^2+3)} \pm \sqrt{(r-r_+)(r-r_-)(\nu^2+3)}} \, .
\end{equation}
$\Omega(r)$ is negative for all $r > r_+$, approaches zero as $r \to +\infty$, and tends to
\begin{equation}
\Omega_{\mathcal{H}} = - \frac{2}{2\nu r_+ - \sqrt{r_+ r_- (\nu^2+3)}}
\label{eq:omegaH}
\end{equation}
as $r \to r_+$. In view of these observations, we can regard $\Omega_{\mathcal{H}}$ as the angular velocity of the outer horizon with respect to stationary observers close to infinity.

One particular important class of observers are the ``locally non-rotating observers'' (LNRO), whose worldlines are everywhere normal to constant-$t$ surfaces. Because of this, they are sometimes also known as ``zero angular momentum observers'' (ZAMO). In this case, $\Omega(r) = - N^{\theta}(r)$, which satisfies \eqref{eq:conditionalmoststationary}. They are the closest to the concept of ``static observers'' in this spacetime.

Even though there is no stationary limit surface, there is still a speed-of-light surface, beyond which an observer cannot corotate with the outer horizon. Given the information above it is easy to check that the vector field $\chi = \partial_t + \Omega_{\mathcal{H}} \, \partial_{\theta}$ is the Killing vector field which generates the horizon. $\chi$ is null at the horizon and at
\begin{equation}
r = r_{\mathcal{C}} = \frac{4\nu^2 r_+ - (\nu^2+3) r_-}{3(\nu^2-1)} \, ,
\end{equation}
which is the location of the speed-of-light surface. 

In the context of quantum field theory, as it is detailed below, the nonexistence of an everywhere timelike Killing vector field in the exterior region of the spacetime is directly related to the nonexistence of a well defined quantum vacuum state which is regular at the horizon and is invariant under the isometries of the spacetime. For the Kerr spacetime, this has been proven in \cite{Kay:1988mu}. A vacuum state with these properties can however be defined if we restrict the spacetime by inserting an appropriate mirrorlike boundary which respects the Killing isometries of the spacetime. The simplest example is a boundary $\mathcal{M}$ at constant radius $r = r_{\mathcal{M}}$, in which the scalar field satisfies Dirichlet boundary conditions, $\Phi(t, r_{\mathcal{M}}, \theta) = 0$. If we choose the radius such that $r_{\mathcal{M}} \in (r_+, r_{\mathcal{C}})$, then $\chi$ is a timelike Killing vector field up to the boundary, and a vacuum state with the above properties is well defined. 
Moreover, the introduction of a mirror with reflective boundary conditions also serves to remove superradiant modes and, thus, any ambiguities they might cause when defining positive frequency mode solutions \cite{Ottewill:2000qh,Duffy:2005mz,Ferreira:2013zta}. 

For convenience, we change coordinates such that $\chi$ is given by $\chi = \partial_{\tilde{t}}$. We shall denote these ``corotating coordinates'' $(\tilde{t}=t,r,\tilde{\theta} = \theta - \Omega_{\mathcal{H}} t)$ and the metric is then given by
\begin{equation}
ds^2 = - N^2(r) d\tilde{t}^2 + \frac{dr^2}{4 R^2(r) N^2(r)} + R^2(r) \left( d\tilde{\theta} + \big( N^{\theta}(r) + \Omega_{\mathcal{H}} \big) d\tilde{t} \right)^2 \, .
\label{eq:metriccorotatingcoords}
\end{equation}

From now on, we consider as the spacetime manifold $M$ the one constructed in the following way. In region I we insert a boundary $\mathcal{M}$ at constant radius $r = r_{\mathcal{M}}$, with $r_{\mathcal{M}} \in (r_+, r_{\mathcal{C}})$, in which Dirichlet boundary conditions are imposed. We denote by $\widetilde{\text{I}}$ the portion of the region I from the horizon up to the mirror. In region IV, a similar boundary $\mathcal{M}'$ is inserted, which can be obtained by the action of a discrete isometry $J$ which takes points in region I to points in region IV by a reflection about the bifurcation surface. In a similar way, a region $\widetilde{\text{IV}}$ is defined. We take as the new manifold $M$ of interest the union of regions $\widetilde{\text{I}}$, II, III and $\widetilde{\text{IV}}$ (see Fig.~\ref{fig:CPdiagrams}).

Even though the resulting manifold is not globally hyperbolic, the Dirichlet boundary conditions imposed on the boundaries are enough to make the time evolution of the Cauchy data in any spacelike surface unique \cite{Avis:1977yn,Ishibashi:2004wx,Seggev:2003rp}. This allows us to analyze quantum field theory in this bounded spacetime.

%% FIELD EQUATION AND BASIS MODES

\subsection{Scalar field equation and basis modes}
\label{sec:fieldeq}

We consider a real massive scalar field $\Phi$ on the exterior region $\widetilde{\text{I}}$. The field obeys the Klein-Gordon equation
\begin{equation}
\left(\nabla^2 - m_0^2 - \xi R \right) \Phi = 0 \, ,
\label{eq:fieldequation1}
\end{equation}
where $m_0$ is the mass of the field, $R$ is the Ricci scalar and $\xi$ is the curvature coupling parameter. The Ricci scalar is given by $R = - 6$, which is a constant, so we can rewrite \eqref{eq:fieldequation1} as
\begin{equation}
\left(\nabla^2 - m^2 \right) \Phi = 0 \, ,
\label{eq:fieldequation2}
\end{equation}
where $m^2 \equiv m_0^2 + \xi R$ is the ``effective squared mass'' of the scalar field.

Since $\partial_{\tilde{t}}$ and $\partial_{\tilde{\theta}}$ are Killing vector fields, we consider mode solutions of \eqref{eq:fieldequation2} of the form
\begin{equation}
\Phi_{\tilde{\omega} k}(\tilde{t},r,\tilde{\theta}) = e^{-i \tilde{\omega} \tilde{t} + i k \tilde{\theta}} \, \phi_{\tilde{\omega} k}(r) \, ,
\label{eq:fieldansatz}
\end{equation}
where $\tilde{\omega} \in \mathbb{R}$ and $k \in \mathbb{Z}$.

In \cite{Ferreira:2013zta}, closed form solutions to \eqref{eq:fieldansatz} were obtained and bases of mode solutions were constructed for the unbounded spacetime. In particular, a set of ``up'' basis modes was introduced for the exterior region, corresponding to flux coming from the black hole which is partially reflected back to the black hole and partially reflected to infinity. With a boundary in place, we define a new set of modes in $\widetilde{\text{I}}$, $\Phi^{\widetilde{\text{I}}}_{\tilde{\omega}k}$, with $\tilde{\omega} > 0$, which are the unique linearly independent solutions that satisfy the Dirichlet boundary conditions at the mirror. We take these solutions to be normalized,
\begin{equation}
\langle \Phi^{\widetilde{\text{I}}}_{\tilde{\omega}k} , \Phi^{\widetilde{\text{I}}}_{\tilde{\omega}'k'} \rangle = \delta_{kk'} \, \delta(\tilde{\omega} - \tilde{\omega}') \, , 
\end{equation}
in the Klein-Gordon inner product on hypersurfaces of constant $\tilde{t}$ in $\widetilde{\text{I}}$.

With the purpose of later defining the Hartle-Hawking state, we need to construct a new mode basis. First, we define modes in the region $\widetilde{\text{IV}}$,  $\Phi^{\widetilde{\text{IV}}}_{\tilde{\omega}k}$, by the action of the discrete isometry $J$ defined previously (which takes points in region $\widetilde{\text{I}}$ to points in region $\widetilde{\text{IV}}$ by a reflection about the bifurcate surface),
\begin{equation}
\Phi^{\widetilde{\text{IV}}}_{\tilde{\omega}k} (x) := \overline{ \Phi^{\widetilde{\text{I}}}_{\tilde{\omega}k} (J^{-1} (x))} \, , \qquad x \in \widetilde{\text{IV}} \, .
\end{equation}

We then understand $\Phi^{\widetilde{\text{I}}}_{\tilde{\omega}k}$ to vanish outside region $\widetilde{\text{I}}$ and $\Phi^{\widetilde{\text{IV}}}_{\tilde{\omega}k}$ to vanish outside region $\widetilde{\text{IV}}$, and we define in the union of $\widetilde{\text{I}}$ and $\widetilde{\text{IV}}$ the new mode solutions  $\Phi^{\text{L}}_{\tilde{\omega}k}$ and $\Phi^{\text{R}}_{\tilde{\omega}k}$, by
\begin{subequations}
\begin{align}
\Phi^{\text{L}}_{\tilde{\omega}k}(x) &:= \frac{1}{\sqrt{1-e^{-2\pi \tilde{\omega} /\kappa_+}}} \left( \Phi^{\widetilde{\text{IV}}}_{\tilde{\omega}k}(x) + e^{-\pi \tilde{\omega} /\kappa_+} \overline{\Phi^{\widetilde{\text{I}}}_{\tilde{\omega}k}(x)} \right) \, , \qquad x \in \widetilde{\text{I}} \cup \widetilde{\text{IV}} \, , \\
\Phi^{\text{R}}_{\tilde{\omega}k}(x) &:= \frac{1}{\sqrt{1-e^{-2\pi \tilde{\omega} /\kappa_+}}} \left( \Phi^{\widetilde{\text{I}}}_{\tilde{\omega}k}(x) + e^{-\pi \tilde{\omega} /\kappa_+} \overline{\Phi^{\widetilde{\text{IV}}}_{\tilde{\omega}k}(x)} \right) \, , \qquad x \in \widetilde{\text{I}} \cup \widetilde{\text{IV}} \, .
\end{align}
\end{subequations}
These L and R modes can now be analytically continued to all of $M$ by crossing the horizon at $r=r_+$ in the lower half-plane in the affine parameters of the generators of the two branches at the horizon. $\Phi^{\text{L}}_{\tilde{\omega}k}$ and $\Phi^{\text{R}}_{\tilde{\omega}k}$ are, hence, of positive frequency in the affine parameters on the horizon. They are further orthonormal in the Klein-Gordon inner product on spacelike hypersurfaces from mirror to mirror (for more details of the construction, see e.g. \cite{Duffy:2005mz} 
or Appendix H of~\cite{Frolov:1998wf}).

%% QUANTIZED FIELD AND HARTLE-HAWKING STATE

\subsection{Quantized field and Hartle-Hawking vacuum state}
\label{sec:quantizedfield}

So far, only classical theory has been discussed. We now proceed to canonically quantize the scalar field using the standard Hilbert space approach. This is possible since, as seen above, there is a natural positive and negative frequency decomposition of the mode solutions for this spacetime. 

Define $\mathscr{H}$ to be the one-particle Hilbert space of the positive frequency L and R solutions, and let $\mathscr{F}_{\text{s}} (\mathscr{H})$ be the corresponding Fock space, defined in the usual way. Denote the vacuum state by $|H \rangle \in \mathscr{F}_s (\mathscr{H})$. Since the L and R solutions are positive frequency with respect to the affine parameters of the past and future horizons, this vacuum state is regular at the horizons. Furthermore, it is invariant under the spacetime isometries. Therefore, we call $|H \rangle$ the ``Hartle-Hawking vacuum state''.

The quantized scalar field $\Phi(x)$ is given by
\begin{equation}
\Phi(x) = \sum_{k=-\infty}^{\infty} \int_0^{\infty} d \tilde{\omega} \left[ a^{\text{L}}_{\tilde{\omega}k} \Phi^{\text{L}}_{\tilde{\omega}k}(x) + a^{\text{R}}_{\tilde{\omega}k} \Phi^{\text{R}}_{\tilde{\omega}k}(x) + \text{h.c.} \right] \, ,
\end{equation}
with $\Phi(x)$ being interpreted as an operator-valued distribution which acts on the Hilbert space $\mathscr{F}_s (\mathscr{H})$. The Hartle-Hawking vacuum state $|H \rangle$ satisfies
\begin{equation}
a^{\text{L}}_{\tilde{\omega}k} |H \rangle = a^{\text{R}}_{\tilde{\omega}k} |H \rangle = 0 \, .
\end{equation}

The Feynman propagator is defined as
\begin{equation}
G^{\text{F}}(x,x') := i \, \langle H | \mathscr{T} \left( \Phi(x) \Phi(x') \right) | H \rangle \, , 
\label{eq:feynmanpropagator}
\end{equation}
where $\mathscr{T}$ is the time-ordering operator. The Feynman propagator is a bidistribution, $G^{\text{F}} \in \mathcal{D}'(M \times M)$, and it is one of the Green's functions associated with the Klein-Gordon equation.

% HADAMARD RENORMALIZATION

\subsection{Hadamard renormalization}
\label{sec:hadamardrenLorentz}

The Feynman propagator, evaluated for certain quantum states, as defined in \eqref{eq:feynmanpropagator}, is a bidistribution of Hadamard type; i.e. it has a Hadamard expansion of the form
\begin{equation}
G^{\text{F}}(x,x') = \frac{i}{4 \sqrt{2} \pi} \left( \frac{U(x,x')}{\sqrt{\sigma(x,x') + i \epsilon}} + W(x,x') \right) \, , \qquad \epsilon \to 0 \! + \, .
\label{eq:GFHadamardexpansion}
\end{equation}
Here, we assume that $x$ and $x'$ belong to a geodesically convex neighborhood $N \subset M$; that is, they are linked by a unique geodesic which lies entirely in $N$. Additionally, $\sigma \in C^{\infty}(N \times N)$ is the Synge's world function, defined such that $\sigma (x,x')$ is the half of the square of the geodesic distance between $x$ and $x'$; $U \in C^{\infty}(N \times N)$ and $W \in C^{\infty}(N \times N)$ are symmetric and regular biscalar functions.

A quantum state for which the short-distance singularity structure of $G^{\text{F}}$ is given by \eqref{eq:GFHadamardexpansion} is called a ``Hadamard state''.

It can be shown (see e.g.~\cite{Decanini:2005eg}) that $U(x,x')$ only depends on the geometry along the geodesics joining $x$ to $x'$, whereas $W(x,x')$ contains the quantum state dependence of the Feynman propagator. Therefore, the singular, state-independent part of the Feynman propagator is
\begin{equation}
G_{\text{Had}}(x,x') := \frac{i}{4\sqrt{2}\pi} \frac{U(x,x')}{\sqrt{\sigma(x,x')+i\epsilon}} \, .
\end{equation}
This is known as the ``Hadamard singular part'' and it is singular at $x' \to x$.

The biscalar $U(x,x')$ can be expanded as
\begin{equation}
U(x,x') = \sum_{j=0}^{\infty} U_j(x,x') \, \sigma^j(x,x') \, .
\end{equation}
For the computation of the vacuum polarization, it is sufficient to know the zeroth term, $U_0(x,x') = 1 + \mathcal{O}(\sigma)$, thus,
\begin{equation}
G_{\text{Had}}(x,x') = \frac{i}{4\sqrt{2}\pi} \frac{1}{\sqrt{\sigma(x,x')+i\epsilon}} + \mathcal{O}(\sigma^{1/2}) \, .
\end{equation}

Given $G_{\text{Had}}$, we may obtain the renormalized vacuum polarization $\langle \Phi^2(x) \rangle$ in any Hadamard state as
\begin{equation}
\langle \Phi^2(x) \rangle := - i \lim_{x' \to x} G^{\text{F}}_{\text{ren}}(x,x') \, ,
\label{eq:phisquareddef}
\end{equation}
where
\begin{equation}
G^{\text{F}}_{\text{ren}}(x,x') := G^{\text{F}}(x,x') - G_{\text{Had}}(x,x') \, .
\end{equation}
To $\langle \Phi^2(x) \rangle$, as defined in \eqref{eq:phisquareddef}, one can add terms proportional to $m$, as can be verified by dimensional analysis. This is a usual feature of any renormalization procedure. For instance, for a scalar field of mass $m$ on Minkowski spacetime in the Minkowski vacuum $|0 \rangle$, the renormalized vacuum polarization computed as in \eqref{eq:phisquareddef} is
\begin{equation}
\langle 0 | \Phi^2(x) | 0 \rangle = - \frac{m}{4\pi} \, .
\end{equation}
We are free to set this quantity to any desired value by adding a term proportional to $m$. In the case of the expectation value of the stress-energy tensor, it is conventional to set $\langle 0 | T_{\mu\nu} (x) | 0 \rangle \equiv 0$ for the Minkowski vacuum. In this paper we shall not attempt to introduce a criterion for fixing this ambiguity and shall just define $\langle \Phi^2(x) \rangle$ as in \eqref{eq:phisquareddef}.

%%%%%%%%%%%%%%%%%%%%%%%%%%%%%%%%%%%%%%%%%%%%%%%%%%%%%%%%%%%%%%%%%%

%%% COMPLEX RIEMANNIAN SECTION

\section{Complex Riemannian section of the spacelike stretched black hole}
\label{sec:complexcase}

In this section, we first consider the complex Riemannian section of the spacelike stretched black hole and obtain the unique Green's function associated with the Klein-Gordon equation as a mode sum. This is followed by a detailed account of the Hadamard renormalization procedure, in which we subtract the divergences in the mode sum by a sum over Minkowski modes with the same singularity structure. A similar subtraction procedure in a static four-dimensional black hole spacetime in with a cosmic string has been considered in~\cite{Ottewill:2010bq}.

%% METRIC

\subsection{Complex Riemannian section}
\label{sec:complexRiemanniansection}

Euclidean methods are a powerful tool to study quantum field theory on static spacetimes. A~static spacetime is a real Lorentzian section of a complex manifold, for which it is always possible to find a real  Riemannian (or ``Euclidean'') section by performing an appropriate analytical continuation (usually a Wick rotation $t \to -i t$, where $t$ is a global timelike coordinate). In many cases, it is much easier to perform calculations in the Riemannian section (e.g. computing the unique Green's function associated with the scalar field equation) and then analytically continue the results back to the Lorentzian section.

The analytic continuation procedure does unfortunately not have an immediate generalization to spacetimes that are stationary but not static. For the exterior of a rotating black hole, one issue is that the exterior need not have a globally defined timelike vector even when each point in the exterior has a neighborhood with such a Killing vector. A~second issue is that there may exist no analytic continuation in the coordinates that results in a real section with a positive definite metric. Both of these issues are present in Kerr (for which the absence of a real section with a positive definite metric is shown in~\cite{Woodhouse:1977-complex}) and in our $(2+1)$-dimensional spacelike stretched black holes. It may be possible to obtain a positive definite metric by analytically continuing not just the coordinates but also the parameters (for continuing the angular momentum parameter in Kerr see~\cite{Hawking:1979ig}), but the physical relevance of continuing parameters seems debatable \cite{Brown:1990di} and we shall not consider such a continuation here. 

For our purposes, it will not be necessary to give an analytic continuation procedure for the full exterior region of the black hole spacetime, but just for region~$\widetilde{\text{I}}$. In this region there exists an everywhere timelike Killing vector field, $\chi = \partial_{\tilde{t}}$. If we now perform a Wick rotation $\tilde{t} = -i \tau$, with $\tau \in \mathbb{R}$, the metric \eqref{eq:metriccorotatingcoords} becomes
\begin{equation}
ds^2 = N^2(r) d\tau^2 + \frac{dr^2}{4 R^2(r) N^2(r)} + R^2(r) \left( d\tilde{\theta} - i \, \big( N^{\theta}(r) + \Omega_{\mathcal{H}} \big) d\tau \right)^2 \, .
\label{eq:metriccomplexRiemanniansection}
\end{equation}
This is the complex-valued metric $g^{\mathbb{C}}$ of the ``complex Riemannian'' (or ``quasi-Euclidean'') section $I^{\mathbb{C}}$ of a complex manifold, in which region $\widetilde{\text{I}}$ is a real Lorentzian section. This metric is regular at the horizon if $\tau$ is periodic with period $2\pi/\kappa_+$, where $\kappa_+$ is the surface gravity,
\begin{equation}
\kappa_+ = \frac{(\nu^2+3)(r_+-r_-)}{2\left(2\nu r_+ - \sqrt{(\nu^2+3)r_+ r_-}\right)}
= \frac{|\Omega_{\mathcal{H}}|}{4} (\nu^2+3)(r_+-r_-) \, .
\end{equation}
The resulting manifold has then two periodic (and thus compact) directions and a third direction that is compact by virtue of the mirror at $r = r_{\mathcal{M}}$. 

The complex Riemannian section of certain rotating spacetimes has been briefly discussed in \cite{Gibbons:1976ue} and \cite{Frolov:1982pi} in the context of the Kerr-Newman black hole. In \cite{Moretti:1999fb}, a more general concept of `local Wick rotation' is discussed for any Lorentzian manifold, even without a timelike Killing vector field, as long as its metric is a locally analytic function of the coordinates.

%% GREEN'S FUNCTION

\subsection{Green's function associated with the Klein-Gordon equation}
\label{sec:greensfunction}

In the real Lorentzian section, we defined the Feynman propagator $G^{\text{F}} \in \mathcal{D}'(\widetilde{\text{I}} \times \widetilde{\text{I}})$ as one of the Green's functions associated with the Klein-Gordon equation satisfied by the scalar field. Here, we find the Green's function $G \in \mathcal{D}'(I^{\mathbb{C}}  \times I^{\mathbb{C}} )$ associated with the Klein-Gordon equation in the complex Riemannian section. It satisfies the distributional equation
\begin{equation}
\left( \nabla^2 - m^2 \right) G(x,x') = - \frac{\delta^3(x,x')}{\sqrt{g(x)}} = - 2 \delta(\tau-\tau') \delta(r-r') \delta(\tilde{\theta}-\tilde{\theta}') \, ,
\end{equation}
where $g(x) := |\det(g^{\mathbb{C}}_{\mu\nu})|$ and $\nabla^2 := (g^{\mathbb{C}})^{\mu\nu} \nabla_{\mu} \nabla_{\nu}$ is the covariant d'Alembertian operator.

In contrast to the real Lorentzian section, in the complex Riemannian section there is a unique solution to this equation which satisfies the following boundary conditions: (i) $G(x,x')$ is regular at $r = r_+$, and (ii) $G(x,x')$ satisfies the Dirichlet boundary conditions at $r = r_{\mathcal{M}}$. This is due to the fact that two of the directions of the spacetime are periodic, while the third direction is compact. Compare this to the situation on static spacetimes without any boundary (and suitable asymptotic properties at infinity), whose Euclidean section has a unique Euclidean Green's function, due to the ellipticity of the Klein-Gordon operator.

Given the periodicity conditions of $\tau$ and $\tilde{\theta}$, one has
\begin{align}
\delta (\tau - \tau') &= \frac{\kappa_+}{2\pi} \sum_{n = -\infty}^{\infty} e^{i \kappa_+ n (\tau - \tau')} \, ,
\label{eq:deltatau} \\
\delta (\tilde{\theta} - \tilde{\theta}') &= \frac{1}{2\pi} \sum_{k = -\infty}^{\infty} e^{i k(\tilde{\theta} - \tilde{\theta}')} \, ,
\label{eq:deltatheta}
\end{align}
understood as distributional identities. If we expand $G(x,x')$ as
\begin{equation}
G(x,x') = \frac{\kappa_+}{4\pi^2} \sum_{n = -\infty}^{\infty} e^{i \kappa_+ n (\tau - \tau')} \sum_{k = -\infty}^{\infty} e^{i k(\tilde{\theta} - \tilde{\theta}')} \, G_{nk}(r,r')
\label{eq:Greenfunction0}
\end{equation}
and use \eqref{eq:deltatau} and \eqref{eq:deltatheta} one obtains a differential equation for $G_{nk}$
\begin{multline}
\frac{d}{dr} \left( 4 R^2(r) N^2(r) \frac{dG_{nk}(r)}{dr} \right) - \frac{1}{R^2(r) N^2(r)} \bigg[ R^2(r) \left( \tilde{\omega} + i k (N^{\theta}(r) + \Omega_{\mathcal{H}}) \right)^2 \\ + N^2(r) \left(k^2 + m^2 R^2(r) \right) \bigg] G_{nk}(r) = - 2 \delta(r-r') \, .
\label{eq:KGBH3}
\end{multline}
The solutions of this equation can be given in terms of solutions of the corresponding homogeneous equation. Two independent solutions of the homogeneous equation are 
\begin{align}
\phi^1_{n k} (z) &= z^{\alpha} (1-z)^{\beta} F(a,b;c;z) \, , \\
\phi^2_{n k} (z) &= z^{\alpha} (1-z)^{\beta} F(a,b;a+b+1-c;1-z) \, ,
\label{eq:exactsolution}
\end{align}
where we introduce a new radial coordinate
\begin{equation}
z = \frac{r-r_+}{r-r_-} \, ,
\label{eq:definitionz}
\end{equation}
and where the parameters of the hypergeometric functions are given by
\begin{align}
a = \alpha + \beta + \gamma \, , \qquad
b = \alpha + \beta - \gamma \, , \qquad
c = 2\alpha + 1 \, ,
\label{eq:defabc}
\end{align}
with
\begin{subequations}
\begin{align}
\alpha &= \frac{|n|}{2} \, , \\
\beta &= \frac{1}{2} + \frac{\sqrt{3(\nu^2-1)}}{\nu^2+3} \sqrt{\frac{(\nu^2+3)^2}{12(\nu^2-1)} \left( 1 + \frac{4m^2}{\nu^2+3} \right) + (\kappa_+ n + i k \Omega_{\mathcal{H}})^2} \, , \\ 
\gamma & = \frac{2\nu r_- - \sqrt{r_+ r_-(\nu^2+3)}}{(\nu^2+3)(r_+ - r_-)} \sqrt{\left[\kappa_+ n + i k \left(N^{\theta}(r_-) + \Omega_{\mathcal{H}} \right) \right]^2} \, .
\end{align}
\label{eq:alphabetagamma}
\end{subequations}
Our convention for the the branch of the square roots  in \eqref{eq:alphabetagamma} is the one with non-negative real part.

Considering again the equation \eqref{eq:KGBH3} for $G_{nk}$, the regular solution near the event horizon at $z=0$ is
\begin{equation}
p_{nk}(z) = \phi^1_{nk}(z) \, ,
\label{eq:pnk}
\end{equation}
whereas the Dirichlet solution near the mirror at $z=z_M$ is given by
\begin{equation}
q_{nk}(z) = \phi^2_{nk}(z) - \frac{\phi^2_{nk}(z_M)}{\phi^1_{nk}(z_M)} \phi^1_{nk}(z) \, .
\label{eq:qnk}
\end{equation}
The radial part of the Green's function is then
\begin{equation}
G_{nk}(z,z') = C_{nk} \, p_{nk}(z_<) \, q_{nk}(z_>) \, ,
\label{eq:Gnk}
\end{equation}
where $z_< := \min \{ z, z' \}$, $z_> := \max \{ z, z' \}$ and $C_{nk}$ is the normalization constant. If we rewrite \eqref{eq:Greenfunction0} as
\begin{equation}
G(x,x') = \frac{|\Omega_{\mathcal{H}}|}{8\pi^2} \sum_{n = -\infty}^{\infty} e^{i \kappa_+ n (\tau - \tau')} \sum_{k = -\infty}^{\infty} e^{i k(\tilde{\theta} - \tilde{\theta}')} \, G_{nk}(r,r') \, ,
\label{eq:Greenfunction1}
\end{equation}
then $C_{nk}$ is given by
\begin{equation}
C_{nk} \equiv \frac{\Gamma(a) \Gamma(b)}{|n|! \, \Gamma(a+b-|n|)} \, . 
\label{eq:Cnk}
\end{equation}
%

% HADAMARD RENORMALIZATION

\subsection{Hadamard renormalization}
\label{sec:Hadamardcomplex}

As we did before with the Feynman propagator in the real Lorentz section, we want to investigate the short-distance singularity structure of the Green's function $G$ obtained in the complex Riemannian section. That is done in some detail in Appendix \ref{appendix:geodesicstructure}, which follows \cite{Moretti:1999fb}.

The main idea is the notion of a \textit{geodesically linearly convex neighborhood} of $p \in I^{\mathbb{C}}$, which is essentially a neighborhood of $p$, $N_p \subset I^{\mathbb{C}}$, such that, for any $q, q' \in N_p$, there is only one real-parameter geodesic segment which links $q$ and $q'$ and which lies completely in $N_p$ (see Appendix \ref{appendix:geodesicstructure} for more details). It was shown in \cite{Moretti:1999fb} that, given a complex Riemannian manifold such as the one considered in this paper, for any given point, there is always a geodesically linearly convex neighborhood $N$. 

Therefore, we can define the \textit{complex Synge's world function} $\sigma \in C^{\omega} (N \times N)$, which reduces to the usual definition for Riemannian and Lorentzian manifolds. In particular, suppose we choose $x$ and $x'$ in a way such that two of their coordinates in a given coordinate system are the same and the induced metric on the submanifold defined by this condition is either Riemannian or Lorentzian. Then, we can use the previous definition as half of the square of the geodesic distance between $x$ and $x'$.

Having checked that the Synge's world function can be defined in the complex Riemannian section, we can now write the Hadamard singular part of the Green's function $G$ as
\begin{equation}
G_{\text{Had}}(x,x') = \frac{1}{4\sqrt{2}\pi} \frac{1}{\sqrt{\sigma(x,x')}} + \mathcal{O}(\sigma^{1/2}) \, .
\label{eq:Hadamardsingpart}
\end{equation}

In an analogous way to the Lorentzian case, we now subtract the Hadamard singular part from the Green's function $G$,
\begin{equation}
G_{\text{ren}}(x,x') := G(x,x') - G_{\text{Had}}(x,x') \, ,
\end{equation}
from which one obtains the vacuum polarization at $x \in \widetilde{\text{I}}$,
\begin{equation}
\langle \Phi^2(x) \rangle = \lim_{x' \to x} G_{\text{ren}}(x,x') \, .
\label{eq:vacuumpolCsection}
\end{equation}
(In a slight abuse of notation, on the rhs of the equation $x, x' \in I^{\mathbb{C}}$, such that $x \in I^{\mathbb{C}}$ is the result of a Wick rotation of $x \in \widetilde{\text{I}}$.)

By construction, the Green's function $G$ is regular at $r = r_+$, satisfies the Dirichlet boundary conditions at $r = r_{\mathcal{M}}$ and is invariant under the spacetime isometries. Therefore, after analytically continuing back to the Lorentz section, $\langle \Phi^2(x) \rangle$ as given by \eqref{eq:vacuumpolCsection} is the vacuum polarization for a scalar field in the Hartle-Hawking state.

% SUBTRACTION OF THE HADAMARD SINGULAR PART

\subsection{Subtraction of the Hadamard singular part}
\label{section:subtractionHadamard}

We have obtained the Green's function $G(x,x')$ as the mode sum \eqref{eq:Greenfunction1}, whereas $G_{\text{Had}}(x,x')$ is given in the closed form \eqref{eq:Hadamardsingpart} by the Hadamard procedure. For computational purposes, it is convenient to consider a particular choice of point separation. Consider the complex Riemannian section in $(\tau, z, \tilde{\theta})$ coordinates and suppose that $x$ and $x'$ are in the region $I^{\mathbb{C}}$ and are \emph{angularly} separated. In this case, $x$ and $x'$ are in a geodesically linearly convex neighborhood and the complex Synge's world function can be obtained for small angular separation using the standard Riemannian relation. It is given by
\begin{equation}
\sigma (x,x') = \frac{1}{2} R^2(r) (\tilde{\theta}' - \tilde{\theta})^2  + \mathcal{O}(\tilde{\theta}' - \tilde{\theta})^3 \, .
\end{equation}
Thus, the Hadamard singular part of the Green's function is
\begin{equation}
G_{\text{Had}}^{\text{BH}}(x,x') = \frac{1}{4\pi} \frac{1}{R(r)|\tilde{\theta}' - \tilde{\theta}|} + \mathcal{O}(\tilde{\theta}' - \tilde{\theta}) \, .
\end{equation}
Without loss of generality, let $x = (\tau, r, 0)$ and $x' = (\tau, r, \tilde{\theta})$, with $\tilde{\theta} > 0$, such that
\begin{equation}
G_{\text{Had}}^{\text{BH}}(x,x') = \frac{1}{4\pi} \frac{1}{R(r) \tilde{\theta}} + \mathcal{O}(\tilde{\theta}) \, .
\label{eq:HadamardsingBH}
\end{equation}

As $G(x,x')$ is known only as the mode sum \eqref{eq:Greenfunction1}, the evaluation of 
$\langle \Phi^2(x) \rangle$ as the limit \eqref{eq:vacuumpolCsection} requires $G_{\text{Had}}^{\text{BH}}$ to be rewritten as a mode sum that can be combined with \eqref{eq:Greenfunction1} so that the divergences in the coincidence limit get subtracted under the sum term by term. We shall accomplish this by comparing $G_{\text{Had}}^{\text{BH}}$ to the Hadamard singular part for a scalar field in rotating Minkowski spacetime in the complex Riemannian section, which is computed in Appendix \ref{appendix:HadamardMinkowski}. The advantage of using the Minkowski spacetime is that its symmetries allow us to compute the Green's function in both closed form and as a mode sum. 

The Hadamard singular part of the Green's function for a scalar field in the Minkowski vacuum for angularly separated points can be written as
\begin{equation}
G_{\text{Had}}^{\mathbb{M}}(x,x') = \frac{1}{4\pi} \frac{1}{\rho \tilde{\theta}} + \mathcal{O}(\tilde{\theta}) = \frac{T}{2\pi} \sum_{k=-\infty}^{\infty} \left( e^{i k \tilde{\theta}} \sum_{n=-\infty}^{\infty} \, \hat{G}_{n k}^{\mathbb{M}} (\rho,\rho') \right) - \hat{G}_{\text{reg}}^{\mathbb{M}}(x,x') + \mathcal{O}(\tilde{\theta}) \, ,
\end{equation}
where the notation is described in Appendices \ref{appendix:Minkowski} and \ref{appendix:HadamardMinkowski}. 

Suppose one identifies the leading terms of the Hadamard singular parts of both spacetimes by
\begin{equation}
\frac{1}{4\pi} \frac{1}{\rho \tilde{\theta}} \equiv \gamma(r) \, \frac{1}{4\pi} \frac{1}{R(r) \tilde{\theta}} \, ,
\end{equation}
where $\gamma(r) > 0$ is a function to be specified below. This provides a matching between the two radial coordinates:
\begin{equation}
\rho(r) = \gamma^{-1}(r) \, R(r) \, .
\label{eq:matchrhor}
\end{equation}

Given this identification, we can now write
\begin{align}
G_{\text{ren}}^{\text{BH}}(x,x') &= G^{\text{BH}}(x,x') - G_{\text{Had}}^{\text{BH}}(x,x') \notag \\
&= \sum_{k=-\infty}^{\infty} e^{i k \tilde{\theta}} \sum_{n=-\infty}^{\infty} \, \left[ \frac{|\Omega_{\mathcal{H}}|}{8\pi^2} \, G_{nk}(r,r) - \frac{T}{2\pi \gamma(r)} \, \hat{G}^{\mathbb{M}}_{nk}(\rho(r),\rho(r)) \right] \notag \\ 
&\quad + \gamma^{-1}(r) \, \hat{G}_{\text{reg}}^{\mathbb{M}}(x,x') + \mathcal{O}(\tilde{\theta}) \, .
\label{eq:Gren}
\end{align}
The parameters on the Minkowski Green's function ($\rho$, $T$, $\Omega$ and $m_{\mathbb{M}}^2$) can now be chosen such that the double sum is convergent when $\tilde{\theta} \to 0$. After this matching is performed, the vacuum polarization is just given by
\begin{align}
\langle \Phi^2 (x) \rangle &= \lim_{\tilde{\theta} \to 0} G_{\text{ren}}^{\text{BH}}(x,x') \, ,
\label{eq:vacuumpolarization}
\end{align}
which is a well-defined smooth function for $x \in \widetilde{\text{I}}$.

% FIXING OF THE MINKOWSKI FREE PARAMETERS

\subsection{Fixing of the Minkowski free parameters}
\label{sec:fixing}

At least some of the parameters of the Minkowski Green's function must be fixed such that the double sum in \eqref{eq:Gren} is convergent in the coincidence limit. To motivate the choice of the parameters, we look at the large $n$ and $k$ behavior of the summand by performing a WKB-like expansion, as explained in Appendix \ref{appendix:WKB}.

Using proposition \ref{prop:WKBterms}, one can write the asymptotic expansions
\begin{equation}
G^{\text{BH}}_{nk}(z,z) \sim \sum_{j=1}^{\infty} G^{\text{BH}(j)}_{nk}(z) \, , \qquad
\hat{G}^{\mathbb{M}}_{nk}(\rho(z),\rho(z)) \sim \sum_{j=1}^{\infty} \hat{G}^{\mathbb{M}(j)}_{nk}(\rho(z)) \, ,
\end{equation}
when $n^2 + k^2 \to \infty$, with
\begin{subequations}
\begin{align}
G^{\text{BH}(1)}_{nk}(z) &= \frac{1}{2 \chi_{\text{BH}}(z)} \, , \\
G^{\text{BH}(2)}_{nk}(z) &= -\frac{\eta^2_{\text{BH}}(z)}{4 \chi_{\text{BH}}^3(z)} - \frac{z}{16 \chi_{\text{BH}}^5(z)} \left( \frac{d \chi_{\text{BH}}^2(z)}{dz} + z \frac{d^2 \chi_{\text{BH}}^2(z)}{dz^2} \right) + \frac{5 z^2}{64 \chi_{\text{BH}}^7(z)} \left( \frac{d \chi_{\text{BH}}^2(z)}{dz} \right)^2 \, ,
\end{align}
\end{subequations}
and
\begin{subequations}
\begin{align}
\hat{G}^{\mathbb{M}(1)}_{nk}(\rho) &= \frac{1}{2 \chi_{\mathbb{M}}(\rho)} \, , \\
\hat{G}^{\mathbb{M}(2)}_{nk}(\rho) &= -\frac{\eta^2_{\mathbb{M}}(\rho)}{4 \chi_{\mathbb{M}}^3(\rho)} - \frac{\rho}{16 \chi_{\mathbb{M}}^5(\rho)} \left( \frac{d \chi_{\mathbb{M}}^2(\rho)}{d\rho} + \rho \frac{d^2 \chi_{\mathbb{M}}^2(\rho)}{d\rho^2} \right) + \frac{5 \rho^2(\rho)}{64 \chi_{\mathbb{M}}^7} \left( \frac{d \chi_{\mathbb{M}}^2(\rho)}{d\rho} \right)^2 \, ,
\end{align}
\end{subequations}
and with
\begin{subequations}
\begin{align}
\chi_{\text{BH}}^2(z) &= \frac{z}{(1-z)^2} \frac{1}{\nu^2+3} \left[ \frac{\left(\kappa_+ n + i k (N^{\theta}(z) + \Omega_{\mathcal{H}})\right)^2}{N^2(z)} + \frac{k^2}{R^2(z)} \right] \, , \\
\eta^2_{\text{BH}}(z) &= \frac{z}{(1-z)^2} \frac{m^2}{\nu^2+3} \, ,
\end{align}
\end{subequations}
and
\begin{subequations}
\begin{align}
\chi_{\mathbb{M}}^2(\rho) &= \rho^2 \left( 2 \pi T n + i k \Omega \right)^2 + k^2 \, , \\
\eta^2_{\mathbb{M}}(\rho) &= \rho^2 m_{\mathbb{M}}^2 \, .
\end{align}
\end{subequations}

This allows us to write
\begin{subequations}
\begin{align}
\frac{|\Omega_{\mathcal{H}}|}{8\pi^2} \, G^{\text{BH}}_{nk}(z,z) &= \frac{1}{8\pi^2 R(z)} \frac{1}{\sqrt{\left(n + i \frac{N^{\theta}(z) + \Omega_{\mathcal{H}}}{\kappa_+} k \right)^2 + \frac{N^2(z)}{\kappa_+ R^2(z)} k^2}} + \mathcal{O}(\chi_{\text{BH}}^{-3}) \, , \\
\frac{T}{2\pi \gamma(z)} \, \hat{G}^{\mathbb{M}}_{nk}(z,z) &= \frac{1}{8\pi^2 R(z)} \frac{1}{\sqrt{\left( n + i \frac{\Omega}{2\pi T} k \right)^2 + \frac{\gamma^2(z)}{(2\pi T)^2 R^2(z)} k^2}} + \mathcal{O}(\chi_{\mathbb{M}}^{-3}) \, .
\end{align}
\end{subequations}
The terms $\mathcal{O}(\chi^{-1})$ in the two expressions match if the parameters $T$, $\Omega$ and $\gamma(z)$ are chosen as
\begin{equation}
\gamma(z) = N(z) \, , \qquad T = \frac{\kappa_+}{2\pi} \, , \qquad \Omega = N^{\theta}(z) + \Omega_{\mathcal{H}} \, .
\label{eq:matchings}
\end{equation}
This choice corresponds to have the temperature $T$ of the scalar field in Minkowski to match the Hawking temperature of the black hole and to have the angular velocity $\Omega$ to be equal to the one measured by a locally nonrotating observer at radius $z$ in the black hole spacetime.

We now claim that, with this choice of parameters, the double sum in \eqref{eq:Gren} is convergent in the coincidence limit.

%
% THEOREM
%
\begin{theorem}
If the parameters $T$, $\Omega$ and $\gamma(z)$ are chosen as in \eqref{eq:matchings}, then
\begin{equation}
\Delta G(z) := \sum_{k = -\infty}^{\infty} \sum_{n = -\infty}^{\infty} \left[ \frac{|\Omega_{\mathcal{H}}|}{8\pi^2} \, G^{\text{BH}}_{nk}(z,z) - \frac{T}{2\pi \gamma(r)} \, \hat{G}^{\mathbb{M}}_{nk}(\rho(z),\rho(z)) \right]
\end{equation}
is finite.
\end{theorem}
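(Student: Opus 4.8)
The plan is to show that the subtracted summand decays like $(n^2+k^2)^{-3/2}$ uniformly, so that the double series is absolutely convergent and hence finite. Fix $z\in(0,z_M)$ throughout and abbreviate the summand as $S_{nk}(z):=\frac{|\Omega_{\mathcal{H}}|}{8\pi^2}\,G^{\text{BH}}_{nk}(z,z)-\frac{T}{2\pi\gamma(z)}\,\hat{G}^{\mathbb{M}}_{nk}(\rho(z),\rho(z))$; all constants below may depend on $z$. The argument has two ingredients: the cancellation of the leading WKB term effected by the matching \eqref{eq:matchings}, and a uniform lower bound $|\chi_{\text{BH}}|,|\chi_{\mathbb{M}}|\gtrsim\sqrt{n^2+k^2}$ that turns the residual $O(\chi^{-3})$ into an absolutely summable tail.

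First I would invoke the WKB analysis of Appendix~\ref{appendix:WKB} (Proposition~\ref{prop:WKBterms}) to peel off the leading term. The matching \eqref{eq:matchings}, $\gamma=N$, $T=\kappa_+/2\pi$, $\Omega=N^\theta+\Omega_{\mathcal{H}}$, was designed precisely so that the weighted leading WKB coefficients agree \emph{identically} in $(n,k)$, i.e.\ $\frac{|\Omega_{\mathcal{H}}|}{8\pi^2}G^{\text{BH}(1)}_{nk}(z)=\frac{T}{2\pi\gamma(z)}\hat{G}^{\mathbb{M}(1)}_{nk}(\rho(z))$, as recorded in the text preceding the theorem. Hence, for $n^2+k^2$ large,
\begin{equation}
S_{nk}(z)=\frac{|\Omega_{\mathcal{H}}|}{8\pi^2}\bigl(G^{\text{BH}}_{nk}-G^{\text{BH}(1)}_{nk}\bigr)-\frac{T}{2\pi\gamma(z)}\bigl(\hat{G}^{\mathbb{M}}_{nk}-\hat{G}^{\mathbb{M}(1)}_{nk}\bigr),
\end{equation}
and by uniformity of the expansions each bracket is bounded by a constant times $|\chi_{\text{BH}}(z)|^{-3}$ respectively $|\chi_{\mathbb{M}}(\rho(z))|^{-3}$ — the next WKB coefficient being already of that order, since $\eta^2=O(1)$ and $d\chi^2/dz=O(\chi^2)$ in $(n,k)$.

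The key step will be the uniform bound $|\chi_{\text{BH}}(z)|^2,\,|\chi_{\mathbb{M}}(\rho(z))|^2\ge c(n^2+k^2)$ for $(n,k)\neq(0,0)$. Writing $\beta:=N^\theta(z)+\Omega_{\mathcal{H}}$ (real) and using $\rho=R/N$ from \eqref{eq:matchrhor} with $\gamma=N$, the definitions give
\begin{equation}
\operatorname{Re}\chi_{\text{BH}}^2=\frac{z}{(1-z)^2(\nu^2+3)}\Bigl[\tfrac{\kappa_+^2 n^2}{N^2}+\Bigl(\tfrac{1}{R^2}-\tfrac{\beta^2}{N^2}\Bigr)k^2\Bigr],\qquad \operatorname{Re}\chi_{\mathbb{M}}^2=\rho^2\kappa_+^2 n^2+(1-\rho^2\beta^2)k^2.
\end{equation}
Since $\chi=\partial_{\tilde{t}}$ is timelike on $\widetilde{\text{I}}$ — i.e.\ $g_{\tilde{t}\tilde{t}}=-N^2+R^2\beta^2<0$ by \eqref{eq:metriccorotatingcoords} — one has $1/R^2-\beta^2/N^2>0$ and $1-\rho^2\beta^2>0$, so both real parts are positive-definite quadratic forms in $(n,k)$; consequently $|\chi^2|\ge\operatorname{Re}\chi^2\ge c(n^2+k^2)$, and $\chi^2$ stays in the open right half-plane so the branch of the square root with non-negative real part (used throughout) is well defined. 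Combining with the previous paragraph, $|S_{nk}(z)|\le C(n^2+k^2)^{-3/2}$ for all $n^2+k^2\ge R_0^2$. Since the number of lattice points with $n^2+k^2\approx\ell^2$ grows like $\ell$, we get $\sum_{n^2+k^2\ge R_0^2}(n^2+k^2)^{-3/2}<\infty$; and the finitely many remaining terms are each finite, because the complex Riemannian Green's function $G$ exists and is unique (hence $C_{nk}$, $G^{\text{BH}}_{nk}(z,z)$ and $\hat{G}^{\mathbb{M}}_{nk}$ are all finite). Therefore $\sum_{n,k}|S_{nk}(z)|<\infty$, the series defining $\Delta G(z)$ converges absolutely, and $\Delta G(z)$ is finite.

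I expect the main obstacle to be the quality of the WKB input underpinning the first step: one needs the weighted leading WKB term to be an \emph{exact} match (a merely asymptotic one would leave an $O(1/|n|)$ residue and reinstate a logarithmic divergence of the $n$-sum), and one needs the post-leading remainder to be $O(|\chi|^{-3})$ \emph{uniformly in every direction of the $(n,k)$ lattice} — which is exactly what makes the $\sqrt{n^2+k^2}$ growth of $|\chi|$ (itself a consequence of the timelike character of $\chi$ on $\widetilde{\text{I}}$) do its work. This technical content is supplied by Appendix~\ref{appendix:WKB}; granting it, the theorem follows from the elementary convergence count above.
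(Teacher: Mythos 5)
Your proposal is correct and follows essentially the same route as the paper: cancellation of the leading WKB terms via the matching \eqref{eq:matchings}, a lower bound on $|\chi_{\text{BH}}^2|$ by its real part, which is a positive-definite quadratic form in $(n,k)$ precisely because $N^2 - R^2(N^{\theta}+\Omega_{\mathcal{H}})^2 > 0$ in region $\widetilde{\text{I}}$, and then the convergence of $\sideset{}{'}\sum_{k,n}(An^2+Bk^2)^{-3/2}$ (the paper's Appendix \ref{appendix:convergenceseries}). Your explicit identification of the positivity of the $k^2$ coefficient with the timelike character of $\chi$ up to the mirror is a nice touch that the paper leaves implicit.
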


%
% PROOF
%
\begin{proof}
It is enough to consider
\begin{equation}
\Delta \tilde{G}(z) := \sideset{}{'}\sum_{k,n} \left[ \frac{|\Omega_{\mathcal{H}}|}{8\pi^2} \, G^{\text{BH}}_{nk}(z,z) - \frac{T}{2\pi \gamma(r)} \, \hat{G}^{\mathbb{M}}_{nk}(\rho(z),\rho(z)) \right] \, ,
\end{equation}
where $\sum_{n,k}'$ stands for the double sum over $k$ and $n$ excluding the $k=n=0$ term.

The first terms in the WKB-like expansion cancel each other, thus
\begin{equation}
\Delta \tilde{G}(z) = \sideset{}{'}\sum_{k,n} \left[ \frac{|\Omega_{\mathcal{H}}|}{8\pi^2} \,  \left[ G^{\text{BH}(2)}_{nk}(z) + \mathcal{O}(\chi_{\text{BH}}^{-5}) \right] - \frac{T}{2\pi \gamma(z)} \, \left[ \hat{G}^{\mathbb{M}(2)}_{nk}(\rho) + \mathcal{O}(\chi_{\mathbb{M}}^{-5}) \right] \right] \, .
\end{equation}
With the choice \eqref{eq:matchings}, one has
\begin{equation}
\chi_{\mathbb{M}}^2(z) = \frac{|\Omega_{\mathcal{H}}| N(z)}{2 \kappa_+} \chi_{\text{BH}}^2(z) \, .
\end{equation}
Therefore
\begin{equation}
\Delta \tilde{G}(z) = \sideset{}{'}\sum_{k,n} \left[ \frac{\mathcal{W}(z)}{\chi_{\text{BH}}^3}  + \mathcal{O}(\chi_{\text{BH}}^{-5}) \right]  \, ,
\end{equation}
where $\mathcal{W}(z)$ does not depend on $n$ and $k$.

Note that:
\begin{align}
\sideset{}{'}\sum_{k,n} \left| \frac{\mathcal{W}(z)}{\chi_{\text{BH}}^3} \right|
&\propto \sideset{}{'}\sum_{k,n} \left| R^2(z) \left(\kappa_+ n + i k (N^{\theta}(z) + \Omega_{\mathcal{H}})\right)^2 + N^2(z) k^2 \right|^{-3/2} \notag \\
&= \sideset{}{'}\sum_{k,n} \Big\{ \left[ R^2(z) \kappa_+^2 n^2 + \left( N^2(z)- R^2(z) (N^{\theta}(z) + \Omega_{\mathcal{H}})^2 \right) k^2 \right]^2 \notag \\
&\qquad\qquad\quad\; + 4 R^4(z) (N^{\theta}(z) + \Omega_{\mathcal{H}})^2 \kappa_+^2 n^2 k^2 \Big\}^{-3/4} \notag \\
&\leq \sideset{}{'}\sum_{k,n} \left[ R^2(z) \kappa_+^2 n^2 + \left( N^2(z)- R^2(z) (N^{\theta}(z) + \Omega_{\mathcal{H}})^2 \right) k^2 \right]^{-3/2} \, .
\end{align}
In Appendix \ref{appendix:convergenceseries}, it is shown that the latter series is convergent. This proves the absolute convergence of
\begin{equation}
\sideset{}{'}\sum_{k,n} \frac{\mathcal{W}(z)}{\chi_{\text{BH}}^3} \, .
\end{equation}

Finally, since
\begin{equation}
\lim_{|\chi_{\text{BH}}| \to \infty} \frac{\left| \frac{\mathcal{W}(z)}{\chi_{\text{BH}}^3}  + \mathcal{O}(\chi_{\text{BH}}^{-5}) \right|}{\left| \frac{\mathcal{W}(z)}{\chi_{\text{BH}}^3} \right|}
= 1 \, ,
\end{equation}
the limit comparison test implies the absolute convergence of
\begin{equation}
\sideset{}{'}\sum_{k,n} \left[ \frac{\mathcal{W}(z)}{\chi_{\text{BH}}^3}  + \mathcal{O}(\chi_{\text{BH}}^{-5}) \right] \, .
\end{equation}

Therefore, we conclude that the $\Delta G(z)$ is finite.
\end{proof}

%%%%%%%%%%%%%%%%%%%%%%%%%%%%%%%%%%%%%%%%%%%%%%%%%%%%%%%%%%%%%%%%%%

%%% NUMERICAL EVALUATION OF THE VACUUM POLARIZATION

\section{Numerical evaluation of the vacuum polarization}
\label{sec:NE}

% PLOT with numerical results
%
\begin{figure}[t!]
\begin{center}
\includegraphics[scale=1]{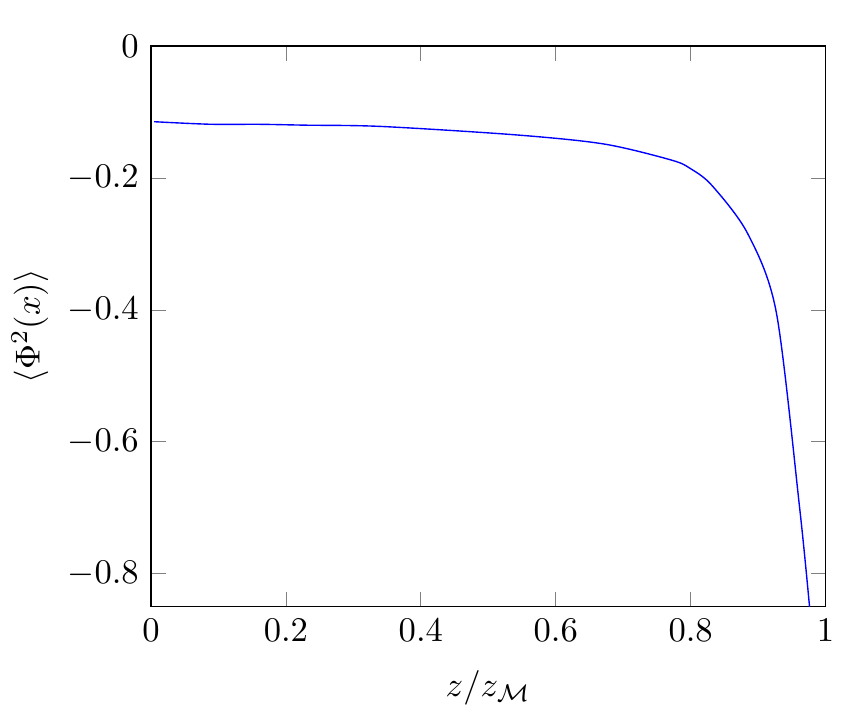} 
\caption{\label{fig:numerics1} Vacuum polarization for the scalar field as a function of $z/z_{\mathcal{M}}$ for $\nu = 1.2$, $r_+ = 15$, $r_- = 1$, $r_{\mathcal{M}} = 62$ and $m = 1$.}
\end{center}
\end{figure}

We numerically compute the vacuum polarization of the scalar field in the Hartle-Hawking state in region $\widetilde{\text{I}}$ using the expressions \eqref{eq:Gren} and \eqref{eq:vacuumpolarization} with the Minkowski parameters chosen as in \eqref{eq:matchings}:
\begin{align}
\langle \Phi^2 (x) \rangle &= \sum_{k=-\infty}^{\infty} \sum_{n=-\infty}^{\infty} \, \left[ \frac{|\Omega_{\mathcal{H}}|}{8\pi^2} \, G_{nk}(z,z) - \frac{\kappa_+}{4\pi^2 N(z)} \, \hat{G}^{\mathbb{M}}_{nk}\left(\tfrac{R(z)}{N(z)},\tfrac{R(z)}{N(z)}\right)\Big|_{\Omega = N^{\theta}(z)+\Omega_{\mathcal{H}}} \right] \notag \\ 
&\quad + \frac{1}{4\pi N(z)} \left[ {-m_{\mathbb{M}}} + \sum_{N \neq 0} \frac{e^{-m_{\mathbb{M}} \sqrt{\left(\frac{N}{T}\right)^2 - 4 \tfrac{R^2(z)}{N^2(z)} \sinh^2 \left( \frac{\Omega N}{2T} \right) + i \epsilon \, \text{sgn}(\Omega N)}}}{\sqrt{\left(\frac{N}{T}\right)^2 - 4 \tfrac{R^2(z)}{N^2(z)} \sinh^2 \left( \frac{\Omega N}{2T} \right) + i \epsilon \, \text{sgn}(\Omega N)}} \right] \, ,
\label{eq:vacuumpolcalculation}
\end{align}
with $\epsilon \to 0+$ indicating the choice of branch of the square root (see details in Appendix \ref{appendix:HadamardMinkowski}).

As described previously, the sums in \eqref{eq:vacuumpolcalculation} are convergent. For the numerical evaluation of the sums, cutoffs are imposed appropriately. Note that the parameter $m_{\mathbb{M}}^2$ is not fixed and it is chosen in such a way to improve the numerical convergence of the double sum over $k$ and $n$.

The numerical results for selected values of the parameters are presented in Fig.~\ref{fig:numerics1}. In the plot, $\langle \Phi^2(x) \rangle$ is shown as a function of the normalized radial coordinate $z/z_{\mathcal{M}}$. Note that $\langle \Phi^2(x) \rangle$ gets arbitrarily large and negative as the mirror is approached, as expected (see e.g. chapter 4.3 of \cite{Birrell1984a}). Note also that the plot is very similar to the one found in Ref.~\cite{Duffy:2002ss} for a scalar field in the (3+1)-dimensional Minkowski spacetime surrounded by a mirror with Dirichlet boundary conditions. 

We reemphasize that the result shown in Fig.~\ref{fig:numerics1} is the full renormalized vacuum polarization in the Hartle-Hawking state. To find the renormalized vacuum polarization in other Hadamard states of interest, such as the Boulware vacuum state, it would suffice to use the Hartle-Hawking state as a reference and just to calculate the difference, which is finite without further renormalization. For comparison, we note that in Kerr with a mirror the difference of the vacuum polarization in the Boulware and Hartle-Hawking states was found in \cite{Duffy:2005mz}, while the renormalized vacuum polarization in the individual states appears to be still unknown.

%%%%%%%%%%%%%%%%%%%%%%%%%%%%%%%%%%%%%%%%%%%%%%%%%%%%%%%%%%%%%%%%%%

%%% CONCLUSIONS

\section{Conclusions}
\label{sec:conclusions}

In this paper, we have computed $\langle \Phi^2(x) \rangle$ for a massive scalar field $\Phi$ in the Hartle-Hawking state on a spacelike stretched black hole with a mirror. We have employed a `quasi-Euclidean' method to obtain a complex Riemannian section of the original spacetime, in which we found the unique Green's function associated with the Klein-Gordon equation. This Green's function is given as a mode sum and its singular behavior in the coincidence limit can be subtracted by a sum over Minkowski modes with the same singularity structure. This renormalization procedure renders a smooth function whose coincidence limit is precisely the renormalized value of $\langle \Phi^2(x) \rangle$. In the future, we intend to extend this method to compute the expectation value of the stress-energy tensor.

A key ingredient in our implementation of the Hadamard renormalization was to match the mode sum for the Green's function on the complex Riemannian section of the black hole to a mode sum on the complex Riemannian section of a rotating Minkowski spacetime. We anticipate that this method can be extended to wider classes of rotating black hole spacetimes, and in particular in four dimensions to the Kerr spacetime. In Kerr, the relevant mode solutions to the Klein-Gordon equation on the complex Riemannian section would need to be constructed fully numerically, but the asymptotic properties of the solutions in the limit of large quantum numbers should be within analytic reach, and it is only these asymptotic properties that are required in the matching to mode solutions on a complex section of rotating Minkowski. Also, the freedom in the shape of the mirror in Kerr should not present complications for the matching since boundary terms in the rotating Minkowski mode functions do not enter the final subtraction terms. The implementation of our method in Kerr would, hence, seem feasible in principle, and it should prove interesting to attempt the implementation in practice.

%%%%%%%%%%%%%%%%%%%%%%%%%%%%%%%%%%%%%%%%%%%%%%%%%%%%%%%%%%%%%%%%%%

%%% ACKNOWLEDGMENTS

\begin{acknowledgements}
We thank Sam Dolan, Christopher Fewster, Bernard Kay, Ko Sanders, Peter Taylor, Helvi Witek and especially Elizabeth Winstanley for helpful discussions and comments. H.~R.~C.~F. acknowledges financial support from Funda\c{c}\~{a}o para a Ci\^{e}ncia e Tecnologia (FCT)-Portugal through Grant No.\ SFRH/BD/69178/2010. J.~L. was supported in part by STFC (Theory Consolidated Grant No.\ ST/J000388/1).
\end{acknowledgements}

%%%%%%%%%%%%%%%%%%%%%%%%%%%%%%%%%%%%%%%%%%%%%%%%%%%%%%%%%%%%%%%%%%

%%% APPENDICES

\appendix

%%% MINKOWSKI SPACETIME

\section{Rotating Minkowski spacetime in the complex Riemannian section}
\label{appendix:Minkowski}

Consider (2+1)-dimensional rotating Minkowski spacetime. Choosing rotating, spherical coordinates $(\tilde{t},\rho,\tilde{\theta})$, its metric is
\begin{equation}
ds^2 = -d\tilde{t}^2 + d\rho^2 + \rho^2 \left( d\tilde{\theta} + \Omega d\tilde{t} \right)^2 \, ,
\end{equation}
with $(\tilde{t},\rho,\tilde{\theta}) \sim (\tilde{t},\rho,\tilde{\theta}+2\pi)$. In the complex Riemannian section, the metric is given by
\begin{equation}
ds^2 = d\tau^2 + d\rho^2 + \rho^2 (d\tilde{\theta} - i \Omega d\tau)^2 \, ,
\end{equation}
with $\Omega \in \mathbb{R}$ and $t = - i \tau$.

Note that in the real Lorentzian section, for $\Omega \neq 0$, the Killing vector field $\chi = \partial_{\tilde{t}}$ becomes spacelike when $\rho > |\Omega|^{-1}$. We restrict our attention to the part of the spacetime where $\rho < \rho_{\mathcal{M}}$, such that at $\rho = \rho_{\mathcal{M}} < |\Omega|^{-1}$ there is a mirrorlike boundary at which Dirichlet boundary conditions are imposed.

Moreover, we will require that
\begin{equation}
(\tau, \rho, \tilde{\theta}) \sim (\tau + T^{-1}, \rho, \tilde{\theta}) \, ,
\end{equation}
where $T > 0$ is to be interpreted as the temperature.

Consider the Klein-Gordon equation for a real scalar field of mass $m$,
\begin{equation}
\left( \nabla^2 - m^2 \right) \Phi(\tau,\rho,\tilde{\theta}) = 0 \, ,
\label{eq:KGeq}
\end{equation}
which in this coordinate system is given by
\begin{equation}
\left[ \frac{\partial^2}{\partial\tau^2} + \frac{1}{\rho} \frac{\partial}{\partial \rho} \left( \rho \frac{\partial}{\partial \rho} \right) + \frac{1-\rho^2 \Omega^2}{\rho^2} \frac{\partial^2}{\partial\tilde{\theta}^2} + 2i \Omega \partial_{\tau} \partial_{\tilde{\theta}} - m^2 \right] \Phi (\tau,\rho,\tilde{\theta}) = 0 \, .
\end{equation}
Using the ansatz $\Phi(\tau,\rho,\theta) = e^{i \tilde{\omega} \tau + i k \tilde{\theta}} \phi(\rho)$ one gets
\begin{equation}
\frac{d^2}{d\rho^2} \phi(\rho) + \frac{1}{\rho} \frac{d}{d\rho} \phi(\rho) - \left( (\tilde{\omega}+ik\Omega)^2 + m^2 + \frac{k^2}{\rho^2} \right) \phi(\rho) = 0 \, .
\label{eq:fieldeq}
\end{equation}
Two independent solutions are
\begin{equation}
\phi^1_{\tilde{\omega} k}(\rho) = I_k \left(\sqrt{(\tilde{\omega}+ik\Omega)^2+m^2} \, \rho \right) \, , \qquad
\phi^2_{\tilde{\omega} k}(\rho) = K_k \left(\sqrt{(\tilde{\omega}+ik\Omega)^2+m^2} \, \rho \right) \, .
\end{equation}
where $I_k$ and $K_k$ are the modified Bessel functions and the principal branch of the square root is understood.

The Green's function $G(x,x')$ associated with \eqref{eq:KGeq} satisfies the equation
\begin{equation}
\left( \nabla^2 - m^2 \right) G(x,x') = - \frac{\delta^3(x,x')}{\sqrt{g(x)}} = - \frac{1}{\rho} \delta(\tau-\tau') \delta(\rho-\rho') \delta(\tilde{\theta}-\tilde{\theta}') \, .
\end{equation}

Given the periodicities of $\tau$ and $\tilde{\theta}$, $\tilde{\omega} = 2\pi T n$, with $n \in \mathbb{Z}$ and $k \in \mathbb{Z}$. One can then write
\begin{align}
\delta (\tilde{\theta} - \tilde{\theta}') &= \frac{1}{2\pi} \sum_{k=-\infty}^{\infty} e^{i k (\tilde{\theta}-\tilde{\theta}')} \, , \\
\delta(\tau-\tau') &= T \sum_{n=-\infty}^{\infty} e^{i n 2\pi T (\tau - \tau')} \, .
\end{align}

If one now expands the Green's function $G(x,x')$ as
\begin{equation}
G(x,x') = \frac{T}{2\pi} \sum_{n=-\infty}^{\infty} e^{i 2\pi T n (\tau - \tau')} \sum_{k=-\infty}^{\infty} e^{i k (\tilde{\theta}-\tilde{\theta}')} \, G_{n k} (\rho,\rho') \, ,
\label{eq:GMinkunren}
\end{equation}
then $G_{n k} (\rho,\rho')$ satisfies
\begin{equation}
\frac{d^2}{d\rho^2} G_{n k} (\rho,\rho') + \frac{1}{\rho} \frac{d}{d\rho} G_{n k} (\rho,\rho') - \left( (2\pi T n+ik\Omega)^2 + m^2 + \frac{k^2}{\rho^2} \right) G_{n k} (\rho,\rho') = - \frac{\delta(\rho-\rho')}{\rho} \, .
\label{eq:Greenfunctioneq}
\end{equation}

Consider the homogeneous equation associated with \eqref{eq:Greenfunctioneq} and let $p_{n k}(\rho)$ be the regular solution near $\rho=0$ and $q_{n k}(\rho)$ be the Dirichlet solution near $\rho=\rho_{\mathcal{M}}$. Then, the unique solution to the inhomogeneous equation is
\begin{equation}
G_{n k} (\rho,\rho') = C_{n k} \, p_{n k}(\rho_<) \, q_{n k}(\rho_>) \, ,
\end{equation}
where $C_{n k}$ is a normalization constant which is determined from the Wronskian relation
\begin{equation}
C_{n k} \left( p_{n k} \frac{d q_{n k}}{d\rho} - q_{n k} \frac{d p_{n k}}{d\rho} \right) = - \frac{1}{\rho} \, .
\label{eq:Comegak}
\end{equation}

Comparing \eqref{eq:fieldeq} and \eqref{eq:Greenfunctioneq} one concludes that the solutions to the homogeneous equation corresponding to \eqref{eq:Greenfunctioneq} are
\begin{equation}
p_{n k}(\rho) = \phi^1_{n k}(\rho) \, , \qquad 
q_{n k}(\rho) = \phi^2_{n k}(\rho) - \frac{\phi^2_{n k}(\rho_{\mathcal{M}})}{\phi^1_{n k}(\rho_{\mathcal{M}})} \phi^1_{n k}(\rho) \, .
\end{equation}

Moreover, Eq.~\eqref{eq:Comegak} leads to $C_{n k} = 1$, thus,
\begin{equation}
G_{n k} (\rho,\rho') = \phi^1_{n k}(\rho_<)  \left[ \phi^2_{n k}(\rho_>) - \frac{\phi^2_{n k}(\rho_{\mathcal{M}})}{\phi^1_{n k}(\rho_{\mathcal{M}})} \phi^1_{n k}(\rho_>) \right] \, ,
\end{equation}
with $\rho_< := \min \{ \rho, \rho' \}$ and $\rho_> := \max \{ \rho , \rho' \}$.

%%% HADAMARD SINGULAR PART FOR MINKOWSKI SPACETIME

\section{Hadamard singular part for rotating Minkowski spacetime in the complex Riemannian section}
\label{appendix:HadamardMinkowski}

In Appendix \ref{appendix:Minkowski}, the Green's function $G(x,x')$ for a scalar field at temperature $T$ in the complex Riemannian section of Minkowski spacetime was obtained as a mode sum over $k$ and $n$. Its Hadamard singular part $G_{\text{Had}}$ is given in closed form by \eqref{eq:HadamardsingBH}. For the purposes of this paper, we also want to express the Hadamard singular part of this Green's function as a mode sum.

We can write the Green's function $G(x,x')$ \eqref{eq:GMinkunren} as
\begin{equation}
G(x,x') = G_{\text{Had}}(x,x') + G_{\text{reg}}(x,x') \, ,
\end{equation}
where $G_{\text{reg}}(x,x')$ is finite when $x' \to x$. As $G_{\text{Had}}$ has no mirror dependence, it is convenient to express it as
\begin{equation}
G_{\text{Had}}(x,x') = \frac{T}{2\pi} \left( \sum_{k=-\infty}^{\infty} e^{i k (\tilde{\theta} - \tilde{\theta}')} \sum_{n=-\infty}^{\infty} e^{i n (\tau - \tau')} \, \hat{G}_{n k} (\rho,\rho') \right) -  \hat{G}_{\text{reg}}(x,x') \, ,
\label{eq:GHadMink}
\end{equation}
with
\begin{equation}
\hat{G}_{n k} (\rho,\rho') := \phi^1_{n k}(\rho) \, \phi^2_{n k}(\rho) \, ,
\end{equation}
and $\hat{G}_{\text{reg}}(x,x')$ finite when $x' \to x$. In this form, neither of the terms on the rhs of \eqref{eq:GHadMink} has any mirror dependence. We have written $G_{\text{Had}}$ as a mode sum (plus a regular term), which can be used to subtract the divergences in the black hole Green's function, as detailed in Sec.~\ref{section:subtractionHadamard}. It remains to compute $\hat{G}_{\text{reg}}(x,x')$. Since this term is finite in the coincidence limit, we only need to determine the limit of this term when $x' \to x$.

First, it will be useful to determine $G_{\text{Had}}$ in closed form. Suppose that $x$ and $x'$ are \emph{angularly} separated, i.e. $\tau = \tau'$ and $\rho = \rho'$. Then, the complex Synge's world function is given by
\begin{equation}
\sigma (x,x') = \frac{1}{2} \, \rho^2(\tilde{\theta}' - \tilde{\theta})^2 + \mathcal{O}(\tilde{\theta}' - \tilde{\theta})^3 \, .
\end{equation}
The Hadamard singular part of the Green's function is then
\begin{equation}
G_{\text{Had}}(x,x') = \frac{1}{4\pi} \frac{1}{\rho|\tilde{\theta}' - \tilde{\theta}|} + \mathcal{O}(\tilde{\theta}' - \tilde{\theta}) \, .
\end{equation}

Without loss of generality, let $x = (\tau, \rho, 0)$ and $x' = (\tau, \rho, \tilde{\theta})$, with $\tilde{\theta} > 0$, such that
\begin{equation}
G_{\text{Had}}(x,x') = \frac{1}{4\pi} \frac{1}{\rho \tilde{\theta}} + \mathcal{O}(\tilde{\theta}) \, .
\label{eq:HadamardsingMink}
\end{equation}

Note now that we can relate the thermal Green's function $G(x,x')$ at temperature $T$ to the Green's function $G_0(x,x')$ of a scalar field at zero temperature using the well-known relationship [(2.111) of \cite{Birrell1984a} or, equivalently, (2.59) of \cite{Fulling1987a}]
\begin{equation}
G(\tau, \rho, \tilde{\theta}; \, \tau', \rho', \tilde{\theta}') = \sum_{N=-\infty}^{\infty} G_0(\tau + \tfrac{N}{T}, \rho, \tilde{\theta}; \, \tau', \rho', \tilde{\theta}') \, .
\end{equation}
The zero-temperature Green's function can be written as
\begin{equation}
G_0(x,x') = \frac{1}{4\pi} \frac{e^{-m \Delta s}}{\Delta s} + G_0^{\mathcal{M}}(x,x') \, ,
\end{equation}
where $G_0^{\mathcal{M}}(x,x')$ is the contribution which contains the mirror dependence and is finite when $x' \to x$. For Minkowski spacetime in the complex Riemannian section, $\Delta s$ is given by
\begin{equation}
\Delta s^2 = (\tau'-\tau)^2 + (\rho - \rho')^2 + 4 \rho \rho' \sin^2  \left[ \frac{1}{2} \left(\tilde{\theta}' - \tilde{\theta} - i\Omega (\tau' - \tau) \right) \right] \, .
\end{equation}

In the case of angular separation, the Green's function becomes
\begin{align}
G(\tau, \rho, 0; \, \tau, \rho, \tilde{\theta}) &= \frac{1}{4\pi} \sum_{N=-\infty}^{\infty} \left[ \frac{e^{-m \sqrt{ \left(\frac{N}{T}\right)^2 + 4 \rho^2 \sin^2 \left( \frac{\tilde{\theta}}{2} + i \frac{\Omega N}{2T} \right)}}}{\sqrt{ \left(\frac{N}{T}\right)^2 + 4 \rho^2 \sin^2 \left( \frac{\tilde{\theta}}{2} + i \frac{\Omega N}{2T} \right)}} + G_0^{\mathcal{M}}(\tau + \tfrac{N}{T}, \rho, 0; \, \tau, \rho, \tilde{\theta}) \right] \notag \\
&= \frac{1}{4\pi} \left\{\rule{0ex}{6ex}\right. \sum_{N \neq 0} \left[ \frac{e^{-m \sqrt{ \left(\frac{N}{T}\right)^2 + 4 \rho^2 \sin^2 \left( \frac{\tilde{\theta}}{2} + i \frac{\Omega N}{2T} \right)}}}{\sqrt{ \left(\frac{N}{T}\right)^2 + 4 \rho^2 \sin^2 \left( \frac{\tilde{\theta}}{2} + i \frac{\Omega N}{2T} \right)}} + G_0^{\mathcal{M}}(\tau + \tfrac{N}{T}, \rho, 0; \, \tau, \rho, \tilde{\theta}) \right] \notag \\
&\qquad\qquad + \frac{e^{-2 m \rho \sin (\tilde{\theta}/2)}}{2 \rho \sin (\tilde{\theta}/2)} + G_0^{\mathcal{M}}(\tau, \rho, 0; \, \tau, \rho, \tilde{\theta})   \left.\rule{0ex}{6ex}\right\}  \notag \\
&= G_{\text{Had}}(x,x') + \hat{G}_{\text{reg}}(x,x') + G^{\mathcal{M}}(x,x')  \, ,
\end{align}
with
\begin{align}
\hat{G}_{\text{reg}}(x,x') &:= \frac{1}{4\pi} \left[ \frac{e^{-2 m \rho \sin (\tilde{\theta}/2)}}{2 \rho \sin (\tilde{\theta}/2)} - \frac{1}{\rho \tilde{\theta}} + \sum_{N \neq 0} \frac{e^{-m \sqrt{\left(\frac{N}{T}\right)^2 + 4 \rho^2 \sin^2 \left( \frac{\tilde{\theta}}{2} + i \frac{\Omega N}{2T} \right)}}}{\sqrt{ \left(\frac{N}{T}\right)^2 + 4 \rho^2 \sin^2 \left( \frac{\tilde{\theta}}{2} + i \frac{\Omega N}{2T} \right)}} \right]  \, , \label{eq:Greg} \\
G^{\mathcal{M}}(x,x') &:= \frac{1}{4\pi} \sum_{N=-\infty}^{\infty} G_0^{\mathcal{M}}(\tau + \tfrac{N}{T}, \rho, 0; \, \tau, \rho, \tilde{\theta}) \, .
\end{align}

$\hat{G}_{\text{reg}}(x,x')$ has a finite limit when $\tilde{\theta} \to 0$, except for isolated values of the parameters at which the the square root in \eqref{eq:Greg} vanishes. To see this, consider the expansion of the argument of the square root for small positive values of $\tilde{\theta}$:
\begin{align}
\left(\frac{N}{T}\right)^2 + 4 \rho^2 \sin^2 \left( \frac{\tilde{\theta}}{2} + i \frac{\Omega N}{2T} \right)
&= \left(\frac{N}{T}\right)^2 - 4 \rho^2 \left[ \sinh^2 \left( \frac{\Omega N}{2T} \right) - i \tilde{\theta} \sinh \left( \frac{\Omega N}{2T} \right) \cosh \left( \frac{\Omega N}{2T} \right) \right] \notag \\
&\quad + \mathcal{O}(\tilde{\theta})^2 \, .
\end{align}
When $\left(\frac{N}{T}\right)^2 - 4 \rho^2 \sinh^2 \left( \frac{\Omega N}{2T} \right) > 0$, the positive branch of the square root is to be used when $\tilde{\theta} \to 0$. Otherwise, when $\left(\frac{N}{T}\right)^2 - 4 \rho^2 \sinh^2 \left( \frac{\Omega N}{2T} \right) < 0$, the square root is given by
\begin{equation}
i \, \text{sgn}(\Omega N) \sqrt{4 \rho^2 \sinh^2 \left( \frac{\Omega N}{2T} \right) - \left(\frac{N}{T}\right)^2} 
= i \, \text{sgn}(\Omega) \frac{N}{T} \sqrt{\frac{4 \rho^2 T^2}{N^2} \sinh^2 \left( \frac{\Omega N}{2T} \right) - 1}  \, .
\end{equation}
Hence, one can take the limit $\tilde{\theta} \to 0$ in $\hat{G}_{\text{reg}}(x,x')$ to obtain
\begin{align}
\lim_{x' \to x} \hat{G}_{\text{reg}}(x,x') = \frac{1}{4\pi} \left[ \! -m + \sum_{N \neq 0} \frac{e^{-m \sqrt{\left(\frac{N}{T}\right)^2 - 4 \rho^2 \sinh^2 \left( \frac{\Omega N}{2T} \right) + i \epsilon \, \text{sgn}(\Omega N)}}}{\sqrt{\left(\frac{N}{T}\right)^2 - 4 \rho^2 \sinh^2 \left( \frac{\Omega N}{2T} \right) + i \epsilon \, \text{sgn}(\Omega N)}} \right] \, ,
\end{align}
with $\epsilon \to 0+$, if $\left(\frac{N}{T}\right)^2 - 4 \rho^2 \sinh^2 \left( \frac{\Omega N}{2T} \right) \neq 0$.

%%% GEODESIC STRUCTURE

\section{Geodesic structure of complex Riemannian manifolds}
\label{appendix:geodesicstructure}

To investigate the short-distance singularity structure of Green's functions in the complex Riemannian manifold, we need to verify that the local geodesic structure has been preserved when going to the complex Riemannian section. We follow \cite{Moretti:1999fb} for this analysis.

First, note that the results in \cite{Moretti:1999fb} are valid for the case in which there is a coordinate system $(x^0, x^1, \cdots)$ which covers the original Lorentzian manifold such that the metric component $g_{00} < 0$ and the inverse $g^{00} < 0$. In our case, $g_{\tilde{t} \tilde{t}} < 0$ and $g^{\tilde{t} \tilde{t}} < 0$, cf.~\eqref{eq:metriccorotatingcoords}. Additionally, the metric is required to be real analytic. 

Now, consider a complex Riemannian manifold $M^{\mathbb{C}}$ with metric $g^{\mathbb{C}}$. The geodesic equations admit locally a unique solution with parameter $t \in \mathbb{C}$ satisfying given initial conditions. If we restrict $t$ to the real domain, $t \in \mathbb{R}$, we obtain a \textit{real-parameter geodesic segment} (the corresponding complex-parameter geodesic segment is obtained by analytical continuation).

We want to define an analogous notion of geodesically convex neighborhood which is valid for the the complex Riemannian manifold. For that, we need a series of intermediate definitions. Let $\gamma_X(t)$, $t \in [0,1]$, be the real-parameter geodesic segment starting at a point $p \in M^{\mathbb{C}}$, with $X \in T_p(M^{\mathbb{C}})$ being the tangent vector to the geodesic at $p$. Let $V \subset T_p(M^{\mathbb{C}})$ be the set of vectors $X$ such that $\gamma_X(t)$ is well defined for $t \in [0,1]$. Then, we define the \textit{exponential map} $\exp_p : V \to M^{\mathbb{C}}$, $X \mapsto \gamma_X(1)$. An open \textit{star-shaped neighborhood} about $0$ of a vector space is such that, if $X$ belongs to the neighborhood, then $\lambda X$, with $\lambda \in [0,1]$, also belongs to the neighborhood. A \textit{normal neighborhood} of $p \in M^{\mathbb{C}}$ is an open neighborhood of $p$ with the form $N_p = \exp_p(S)$, with $S \subset V \subset T_p(M^{\mathbb{C}})$ an open star-shaped neighborhood of $0 \in T_p(M^{\mathbb{C}})$. A \textit{totally normal neighborhood} of $p \in M^{\mathbb{C}}$ is a neighborhood of $p$, $O_p \subset M^{\mathbb{C}}$, such that, if $q \in O_p$, there is a normal neighborhood of $q$, $N_q$, with $O_p \subset N_q$. Finally, a \textit{geodesically linearly convex neighborhood} of $p \in M^{\mathbb{C}}$ is a totally normal neighborhood of $p$, $N_p \subset M^{\mathbb{C}}$, such that, for any $q, q' \in N_p$, there is only one real-parameter geodesic segment which links $q$ and $q'$ and which lies completely in $N_p$.

It was shown in \cite{Moretti:1999fb} that, given a complex Riemannian manifold with the properties above, for any given point, there is always a geodesically linearly convex neighborhood. Therefore, we can define the \textit{complex Synge's world function} as follows. Given a geodesically linearly convex neighborhood $N \subset M^{\mathbb{C}}$, the complex Synge's world function $\sigma \in C^{\omega} (N \times N)$ is given by
\begin{equation}
\sigma (x, x') := \frac{1}{2} g(x) \left( \exp_x^{-1} (x'), \exp_x^{-1} (x') \right) \, .
\end{equation}
This reduces to the usual definition for real Riemannian and Lorentzian manifolds. In particular, suppose we choose $x$ and $x'$ in a way such that some of their coordinates in a given coordinate system are the same and the induced metric on the submanifold defined by this condition is either real Riemannian or Lorentzian. Then, we can use the usual definition as half of the square of the geodesic distance between $x$ and $x'$.

%%% WKB-LIKE EXPANSIONS

\section{WKB-like asymptotic expansion of the Green's function summand}
\label{appendix:WKB}

In this appendix, we want to study the large $n$ and $k$ behavior of the Green's function summand. For that, we obtain a WKB-like expansion of the summands, in a similar way to the approach used in \cite{Howard:1984qp}.

Let's consider the method in generality. Let $\phi^1(r)$, $\phi^2(r)$ be two independent solutions of the radial field equation, where $r$ is a radial coordinate. Suppose we define a new radial coordinate $\xi$ such that the radial field equation can be written in the form
\begin{equation}
\frac{d^2 \phi(\xi)}{d\xi^2} - \left( \chi^2(\xi) + \eta^2(\xi) \right) \phi(\xi) = 0 \, ,
\label{eq:fieldeqxi}
\end{equation}
and the Wronskian relation is given by
\begin{equation}
\phi^1(\xi) \frac{d\phi^2(\xi)}{d\xi} - \phi^2(\xi) \frac{d\phi^1(\xi)}{d\xi} = \frac{1}{C} \, ,
\label{eq:wronskianxi}
\end{equation}
where $C$ is a constant. Here, $\chi^2(\xi)$ contains all the $n$ and $k$ dependence and is large whenever $\lambda^2 := k^2+n^2$ is large. We assume then that $f(\xi; \lambda) := - \left( \chi^2(\xi) + \eta^2(\xi) \right)$ has an asymptotic expansion of the form
\begin{equation}
f(\xi; \lambda) \sim \lambda^2 \sum_{j=0}^{\infty} f_j(\xi) a_j(\lambda) \, , \qquad \lambda \to +\infty \, ,
\end{equation}
where $\{ a_j (\lambda) \}$ is an asymptotic sequence such that $a_0(\lambda) = 1$. In this case, standard WKB theory guarantees that there is an asymptotic expansion for the solutions $\phi^i(\xi)$, $i=1,2$, when $\lambda \to +\infty$, given by the so-called WKB method (see e.g. \cite{Miller:2006}).

We are interested in obtaining the large $\chi$ expansion of
\begin{equation}
\mathcal{G} (\xi) := C \, \phi^1(\xi) \, \phi^2(\xi) \, .
\end{equation}
We can obtain its asymptotic expansion in a more direct way as follows.

\begin{lemma}
$\mathcal{G}(\xi)$ satisfies the differential equation
\begin{equation}
\mathcal{G}(\xi) = \frac{1}{2 \chi(\xi)} \left[ 1 - \frac{1}{\chi^2(\xi)} \left( \mathcal{G}^{-1/2}(\xi) \frac{d^2 \mathcal{G}^{1/2} (\xi)}{d\xi^2} - \eta^2(\xi) \right) \right]^{-1/2} \, .
\end{equation}
\end{lemma}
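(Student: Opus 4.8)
The plan is to convert the linear equation \eqref{eq:fieldeqxi} together with the Wronskian normalization \eqref{eq:wronskianxi} into a closed nonlinear second-order equation for $\mathcal{G}$, and then recast that equation in the quoted form by passing to $\mathcal{G}^{1/2}$. First I would write $y := \phi^1(\xi)\phi^2(\xi)$, so that $\mathcal{G} = C\,y$, and differentiate twice; using \eqref{eq:fieldeqxi} to replace each $(\phi^i)''$ by $\bigl(\chi^2(\xi)+\eta^2(\xi)\bigr)\phi^i$ gives
\[
y'' = 2\bigl(\chi^2+\eta^2\bigr)\,y + 2\,(\phi^1)'(\phi^2)' \, .
\]
The cross term is eliminated with the Wronskian: combining $y' = (\phi^1)'\phi^2 + \phi^1(\phi^2)'$ with $\phi^1(\phi^2)' - \phi^2(\phi^1)' = 1/C$ yields $\phi^1(\phi^2)' = \tfrac12(y'+1/C)$ and $\phi^2(\phi^1)' = \tfrac12(y'-1/C)$, hence $(\phi^1)'(\phi^2)' = \bigl(\phi^1(\phi^2)'\bigr)\bigl(\phi^2(\phi^1)'\bigr)/y = (y'^2 - C^{-2})/(4y)$. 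Substituting and multiplying through by $C$ gives the closed equation
\[
\mathcal{G}'' = 2\bigl(\chi^2+\eta^2\bigr)\,\mathcal{G} + \frac{\mathcal{G}'^2 - 1}{2\,\mathcal{G}} \, .
\]

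The remaining step is purely algebraic. A direct computation gives $\mathcal{G}^{-1/2}\,\dfrac{d^2\mathcal{G}^{1/2}}{d\xi^2} = \dfrac{\mathcal{G}''}{2\mathcal{G}} - \dfrac{(\mathcal{G}')^2}{4\mathcal{G}^2}$; inserting the equation just derived, the $(\mathcal{G}')^2$ terms cancel and one is left with
\[
\mathcal{G}^{-1/2}\,\frac{d^2\mathcal{G}^{1/2}}{d\xi^2} - \eta^2 = \chi^2 - \frac{1}{4\,\mathcal{G}^2} \, ,
\]
so that $1 - \chi^{-2}\bigl(\mathcal{G}^{-1/2}(\mathcal{G}^{1/2})'' - \eta^2\bigr) = (4\chi^2\mathcal{G}^2)^{-1}$. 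Taking the reciprocal square root and multiplying by $(2\chi)^{-1}$ returns $\mathcal{G}$, which is the asserted identity.

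Both steps are elementary, and the only point that genuinely needs care is the choice of branch for the square roots: on the complex Riemannian section $\chi$, $\eta$ and $\mathcal{G}$ are generically complex, so the passage $(4\chi^2\mathcal{G}^2)^{-1/2} = (2\chi\mathcal{G})^{-1}$ and the final reconstruction $\tfrac{1}{2\chi}\cdot 2\chi\mathcal{G} = \mathcal{G}$ are valid only with the branches fixed consistently (and with $\mathcal{G}$ nonvanishing, which holds on the region of interest). I would fix them via the asymptotic normalization $\mathcal{G} \sim 1/(2\chi)$ as $|\chi| \to \infty$ inherited from the normalization of $\phi^1,\phi^2$ and the convention of \eqref{eq:alphabetagamma} (square roots taken with non-negative real part), and record that it is precisely with this convention that the identity is subsequently iterated to generate the WKB-like asymptotic series. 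I do not expect any obstacle beyond this bookkeeping.
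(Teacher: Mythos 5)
Your proposal is correct and follows essentially the same route as the paper: the intermediate identity you reach, $\mathcal{G}^{-1/2}(\mathcal{G}^{1/2})'' - \eta^2 = \chi^2 - \tfrac{1}{4}\mathcal{G}^{-2}$, is exactly the nonlinear equation $\frac{d^2 \mathcal{G}^{1/2}}{d\xi^2} - (\chi^2+\eta^2)\mathcal{G}^{1/2} + \tfrac{1}{4}\mathcal{G}^{-3/2} = 0$ that the paper's proof states and then rearranges into the quoted form. You merely make explicit the elimination of the cross term $(\phi^1)'(\phi^2)'$ via the Wronskian and the branch bookkeeping, which the paper leaves implicit.
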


\begin{proof}
Using \eqref{eq:fieldeqxi} and \eqref{eq:wronskianxi}, we obtain that $\mathcal{G}$ satisfies the following nonlinear equation
\begin{equation}
\frac{d^2 \mathcal{G}^{1/2} (\xi)}{d\xi^2} - \left( \chi^2(\xi) + \eta^2(\xi) \right) \mathcal{G}^{1/2} (\xi) + \frac{1}{4 \mathcal{G}^{3/2} (\xi)} = 0 \, ,
\end{equation}
which can be equivalently written as above.
\end{proof}

To obtain the large $\chi$ expansion of $\mathcal{G}(\xi)$, one can introduce an expansion parameter $\epsilon$ (which will be set to 1 at the end of the calculation)
\begin{equation}
\mathcal{G}(\xi) = \frac{1}{2 \chi(\xi)} \left[ 1 - \frac{1}{\epsilon^2 \chi^2(\xi)} \left( \mathcal{G}^{-1/2}(\xi) \frac{d^2 \mathcal{G}^{1/2} (\xi)}{d\xi^2} - \eta^2(\xi) \right) \right]^{-1/2} \, ,
\end{equation}
and expand $\mathcal{G}(\xi)$ in powers of $\epsilon^{-1}$ as
\begin{equation}
\mathcal{G}(\xi) \sim \sum_{j=1}^{\infty} \frac{\mathcal{G}^{(j)}(\xi)}{\epsilon^{2(j-1)}} \, .
\end{equation}

\begin{proposition}
The first two terms in the asymptotic expansion of $\mathcal{G}(\xi)$ are
\begin{align}
\mathcal{G}^{(1)}(\xi) = \frac{1}{2 \chi(\xi)} \, , \qquad
\mathcal{G}^{(2)}(\xi) = -\frac{\eta^2(\xi)}{4 \chi^3(\xi)} - \frac{1}{16 \chi^5(\xi)} \frac{d^2 \chi^2}{d\xi^2} + \frac{5}{64 \chi^7} \left( \frac{d \chi^2(\xi)}{d\xi} \right)^2 .
\end{align}
\label{prop:WKBterms}
\end{proposition}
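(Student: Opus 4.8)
The plan is to treat the nonlinear fixed-point equation for $\mathcal{G}(\xi)$ established in the Lemma as the starting point and solve it iteratively in powers of $\epsilon^{-2}$. Writing $\mathcal{G}^{1/2}$ for $(\mathcal{G}(\xi))^{1/2}$ and abbreviating $D := \mathcal{G}^{-1/2}\,\frac{d^2 \mathcal{G}^{1/2}}{d\xi^2} - \eta^2$, the equation reads $\mathcal{G} = \tfrac{1}{2\chi}\bigl(1 - \epsilon^{-2}\chi^{-2} D\bigr)^{-1/2}$. Since standard WKB theory for the solutions $\phi^1,\phi^2$ (cited in this appendix) guarantees that $\mathcal{G}=C\phi^1\phi^2$ admits an asymptotic expansion in the relevant asymptotic sequence, I would substitute the ansatz $\mathcal{G}(\xi) \sim \sum_{j\ge 1}\mathcal{G}^{(j)}(\xi)\,\epsilon^{-2(j-1)}$ into this equation and match powers of $\epsilon^{-2}$.

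At leading order ($\epsilon^0$) the bracket reduces to $1$, giving immediately $\mathcal{G}^{(1)}(\xi) = \tfrac{1}{2\chi(\xi)}$. For the next order I would expand $(1-u)^{-1/2} = 1 + \tfrac12 u + O(u^2)$ with $u = \epsilon^{-2}\chi^{-2} D$. The observation that keeps the computation short is that $u$ is already $O(\epsilon^{-2})$, so inside $D$ it suffices to use the leading term $\mathcal{G}^{1/2} \simeq (\mathcal{G}^{(1)})^{1/2} = (2\chi)^{-1/2}$; corrections to $\mathcal{G}^{1/2}$ inside $D$ feed only into $\mathcal{G}^{(3)}$ and higher. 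Hence $\mathcal{G}^{(2)} = \tfrac{1}{2\chi}\cdot\tfrac{1}{2\chi^2}\bigl[(2\chi)^{1/2}\tfrac{d^2}{d\xi^2}(2\chi)^{-1/2} - \eta^2\bigr]$.

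It then remains to evaluate $\tfrac{d^2}{d\xi^2}(2\chi)^{-1/2}$. To avoid half-integer powers I would re-express everything through $\chi^2$: writing $(2\chi)^{-1/2} = 2^{-1/2}(\chi^2)^{-1/4}$ and differentiating twice gives $(2\chi)^{1/2}\tfrac{d^2}{d\xi^2}(2\chi)^{-1/2} = \tfrac{5}{16}\chi^{-4}\bigl(\tfrac{d\chi^2}{d\xi}\bigr)^2 - \tfrac14 \chi^{-2}\tfrac{d^2\chi^2}{d\xi^2}$. Substituting this back yields $\mathcal{G}^{(2)}(\xi) = -\tfrac{\eta^2}{4\chi^3} - \tfrac{1}{16\chi^5}\tfrac{d^2\chi^2}{d\xi^2} + \tfrac{5}{64\chi^7}\bigl(\tfrac{d\chi^2}{d\xi}\bigr)^2$, which is the claimed formula.

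The only non-routine point is the justification that the formal matching is legitimate — that is, that the asymptotic series for $\mathcal{G}$ in the chosen asymptotic sequence $\{a_j(\lambda)\}$ really exists and may be differentiated term by term — which I would handle by invoking the WKB existence theorem for $\phi^1,\phi^2$ quoted just before the Lemma, together with the remark that $\mathcal{G}^{-1/2}(\mathcal{G}^{1/2})''$ is a smooth functional of $\mathcal{G}$ whose asymptotic expansion is therefore inherited from that of $\mathcal{G}$ (with $\chi^2$ large and $\eta^2$ of order one). Everything else is the order-by-order bookkeeping just described.
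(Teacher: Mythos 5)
Your proposal is correct: the iterative substitution of the ansatz into the Lemma's fixed-point equation, the observation that only the leading term $(2\chi)^{-1/2}$ is needed inside $D$ at this order, and the evaluation of $(2\chi)^{1/2}\tfrac{d^2}{d\xi^2}(2\chi)^{-1/2}$ via $\chi^2$ all check out and reproduce the stated $\mathcal{G}^{(1)}$ and $\mathcal{G}^{(2)}$. This is exactly the ``direct computation'' the paper invokes (its proof consists of those two words), so your write-up is simply a filled-in version of the paper's own argument.
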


\begin{proof}
Direct computation.
\end{proof}

%%% CONVERGENCE OF A SERIES

\section{Proof of convergence of a series}
\label{appendix:convergenceseries}

\begin{proposition}
Let $A$, $B>0$. Then,
\begin{align}
S := \sideset{}{'}\sum_{k,n} \frac{1}{(A n^2 + B k^2)^{3/2}} < \infty
\end{align}
where $\sum_{n,k}'$ stands for the double sum over all $k, n \in \mathbb{Z}$ excluding the $k=n=0$ term.
\end{proposition}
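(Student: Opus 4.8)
The plan is to prove the convergence of
\[
S = \sideset{}{'}\sum_{k,n} \frac{1}{(A n^2 + B k^2)^{3/2}}
\]
by comparing it to a convergent integral, using the standard technique for lattice sums of this type. First I would note that the summand $f(x,y) = (Ax^2 + By^2)^{-3/2}$ is a positive, radially decreasing function of the rescaled variables, so the sum over integer lattice points (excluding the origin) is controlled by the corresponding integral over $\mathbb{R}^2$ minus a neighbourhood of the origin. Concretely, since $f$ is decreasing along rays from the origin, for each lattice point $(n,k) \neq (0,0)$ the value $f(n,k)$ is bounded above by the average of $f$ over a unit square suitably placed (one not containing the origin), which gives a bound of the form $S \le C \int_{\{x^2+y^2 \ge c\}} (Ax^2+By^2)^{-3/2}\,dx\,dy$ for appropriate constants $C, c > 0$.

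The key step is then to evaluate or bound this integral. I would perform the linear change of variables $u = \sqrt{A}\,x$, $v = \sqrt{B}\,y$, whose Jacobian is $1/\sqrt{AB}$, transforming the integrand into $(u^2+v^2)^{-3/2}$ over a region that still excludes a neighbourhood of the origin (possibly an ellipse, but contained in some disc of radius $c' > 0$). Passing to polar coordinates $(u,v) = (\varrho\cos\varphi, \varrho\sin\varphi)$ gives
\[
\int_{\varrho \ge c'} \frac{\varrho\,d\varrho\,d\varphi}{\varrho^3} = 2\pi \int_{c'}^{\infty} \frac{d\varrho}{\varrho^2} = \frac{2\pi}{c'} < \infty,
\]
so the integral is finite. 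Combined with the comparison from the first step, this shows $S < \infty$. I would also dispose of the finitely many lattice points lying inside the excluded disc separately — there are only finitely many of them and each contributes a finite amount, since the origin itself is omitted from the sum.

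The main obstacle — really the only subtle point — is making the comparison between the sum and the integral fully rigorous, i.e. justifying the inequality $f(n,k) \le \int_{Q_{n,k}} f$ for an appropriately chosen unit square $Q_{n,k}$ not containing the origin, uniformly in $(n,k)$. The monotonicity of $f$ along rays handles this for all lattice points sufficiently far from the origin, and the remaining ones are finite in number and handled by hand; one must simply be careful that the squares $Q_{n,k}$ can be chosen disjoint (or with bounded overlap) and all contained in the region $\{x^2+y^2 \ge c\}$. An alternative, slightly cleaner route avoids squares altogether: split the sum dyadically over annuli $2^{j} \le An^2+Bk^2 < 2^{j+1}$, note that the number of lattice points in the $j$-th annulus is $O(2^{j})$ (by the area of the annulus), so the $j$-th block contributes $O(2^{j} \cdot 2^{-3j/2}) = O(2^{-j/2})$, and the geometric series $\sum_j 2^{-j/2}$ converges. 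Either way the conclusion is immediate once the counting estimate is in place.
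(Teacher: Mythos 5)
Your proof is correct, but it takes a genuinely different route from the paper. The paper's argument in Appendix E is one-dimensional at heart: it first performs the inner sum over $n$ to define $S_k$, then shows that $k^2 S_k$ is a Riemann sum (with mesh $1/k$) for $\int_{-\infty}^{\infty}(B+At^2)^{-3/2}\,dt = 2/(B\sqrt{A})$, concluding $S_k \sim 2/(B\sqrt{A}\,k^2)$ as $|k|\to\infty$ and hence summability over $k$. Your approach is the standard two-dimensional Epstein-zeta argument: compare the lattice sum directly to $\int (Ax^2+By^2)^{-3/2}\,dx\,dy$ away from the origin, or equivalently decompose dyadically over elliptical annuli and use the lattice-point count $O(2^j)$ to get blocks of size $O(2^{-j/2})$. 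What your route buys is generality and robustness: it works verbatim for any exponent $s>1$ in place of $3/2$, and the dyadic version sidesteps the mildly delicate point in the paper's proof that the Riemann sums of an improper integral over all of $\mathbb{R}$ actually converge to that integral (which needs a monotonicity or domination argument that the paper leaves implicit). What the paper's route buys is the sharp asymptotic constant for $S_k$, which is more information than bare convergence. Your own acknowledged subtle point --- choosing the unit squares $Q_{n,k}$ disjoint, away from the origin, and handling the finitely many exceptional lattice points by hand --- is real but entirely routine, and your dyadic alternative disposes of it cleanly, so there is no gap.
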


\begin{proof}
We write
\begin{equation}
S = \sum_{k \in \mathbb{Z}} S_k \, ,
\end{equation}
where
\begin{equation}
S_k := \begin{cases} \displaystyle\sum_{n \in \mathbb{Z}} \frac{1}{(A n^2 + B k^2)^{3/2}} \, , & k \neq 0 \, , \\
\displaystyle\sum_{n \in \mathbb{Z} \setminus \{0 \}} \frac{1}{A^{3/2} \, n^3} \, , & k = 0 \, .
\end{cases}
\end{equation}
Each $S_k$ is clearly finite, and $S_{-k} = S_k$. For $k>0$ we have
\begin{equation}
k^2 \, S_k = \sum_{n=-\infty}^{\infty} \frac{1}{\left[B + A (\frac{n}{k})^2 \right]^{3/2}} \frac{1}{k} \; \xrightarrow{\;\; k \to \infty \;\;} \; \frac{2}{B \sqrt{A}} \, ,
\label{eq:seriesRiemannsum}
\end{equation}
since the series in \eqref{eq:seriesRiemannsum} becomes the Riemann sum for the integral
\begin{equation}
\int_{-\infty}^{\infty} \frac{dt}{(B + A t^2)^{3/2}} = \frac{2}{B \sqrt{A}} \, .
\end{equation}
Thus,
\begin{equation}
S_k \sim \frac{2}{B \sqrt{A}} \frac{1}{k^2} \, , \qquad |k| \to \infty \, .
\end{equation}
so that $S$ is finite.

\end{proof}

%%% BIBLIOGRAPHY

\end{document}